\algrenewcommand\algorithmicindent{0.8em}%
\pgfplotsset{compat=1.11,
        /pgfplots/ybar legend/.style={
        /pgfplots/legend image code/.code={%
        \draw[##1,/tikz/.cd,bar width=3pt,yshift=-0.2em,bar shift=0pt]
                plot coordinates {(0cm,0.8em)};},
},
}
\DeclareMathAlphabet{\mathcalligra}{T1}{calligra}{m}{n}
\algrenewcommand\alglinenumber[1]{\tiny #1:}
\newcommand{\punt}[1]{}
\newcommand{\cmnt}[1]{}
\newcommand{\cgds} {{concurrent graph data-structure\xspace}\xspace}
\newcommand{\cds} {CDS\xspace}
\newcommand{\lble} {linearizable\xspace}
\newcommand{\lbty} {linearizability\xspace}
\newtheorem{theorem}{Theorem}
\newcounter{history}
\newenvironment{proof}[1][Proof]{\noindent\textbf{#1.} }{\hfill $\Box$\\[0.4mm]} 
\newcommand{\secref}[1]{Section~\ref{sec:#1}}
\newcommand{\figref}[1]{Figure~\ref{fig:#1}}
\newcommand{\lineref}[1]{Line~\ref{lin:#1}}
\newcommand{\ignore}[1]{}
\newcommand{\operation} {op\xspace}
\newcommand{\optyp} {opType\xspace}
\newcommand{\mth} {method\xspace}
\newcommand{\cc} {correctness-criterion\xspace}
\newcommand{\lp} {LP\xspace}
\newcommand{\lylist} {lazy list\xspace}
\newcommand{\nblist} {non-blocking list\xspace}
\newcommand{\tru} {\texttt{true}\xspace}
\newcommand{\fal} {\texttt{false}\xspace}
\newcommand{\nul} {\texttt{NULL}\xspace}
\newcommand{\vnodes} {{\tt VNodes}\xspace}
\newcommand{\enodes} {{\tt ENodes}\xspace}
\newcommand{\enode}{{\tt ENode}\xspace}
\newcommand{\vnode}{{\tt VNode}\xspace}
\newcommand{\vlist} {vertex-list\xspace}
\newcommand{\elist} {edge-list\xspace}
\newcommand{\elists} {edge-lists\xspace}
\newcommand{\vh}{\texttt{VHead}\xspace}
\newcommand{\vt}{\texttt{VTail}\xspace}
\newcommand{\eh}{\texttt{EHead}\xspace}
\newcommand{\et}{\texttt{ETail}\xspace}
\newcommand{\taddv}{\texttt{addvertex}\xspace}
\newcommand{\tremv}{\texttt{remvertex}\xspace}
\newcommand{\tadde}{\texttt{addedge}\xspace}
\newcommand{\treme}{\texttt{remedge}\xspace}
\newcommand{\tconv}{\texttt{findvertex}\xspace}
\newcommand{\tcone}{\texttt{findedge}\xspace}
\newcommand{\addv}{\textsc{AddVertex}\xspace}
\newcommand{\remv}{\textsc{RemoveVertex}\xspace}
\newcommand{\createe} {\textsc{CEnode}\xspace}
\newcommand{\createv}{\textsc{CVnode}\xspace}
\newcommand{\fadd}{FetchAndAdd\xspace}
\newcommand{\nbk}{non-blocking\xspace}
\newcommand{\cas}{compare-and-swap\xspace}
\newcommand{\CAS}{\texttt{CAS}\xspace}
\newcommand{\faa}{fetch-and-add\xspace}
\newcommand{\FAA}{\texttt{FAA}\xspace}
\newcommand{\ds}{data-structure\xspace}
\newcommand{\lf}{lock-free\xspace}
\newcommand{\wf}{wait-free\xspace}
\newcommand{\wfdm}{wait-freedom\xspace}
\newcommand{\nremv}{\textsc{RemoveVertex}\xspace}
\newcommand{\naddv}{\textsc{AddVertex}\xspace}
\newcommand{\nconv}{\textsc{ConVertex}\xspace}
\newcommand{\ncone}{\textsc{ConEdge}\xspace}
\newcommand{\nadde}{\textsc{AddEdge}\xspace}
\newcommand{\nreme}{\textsc{RemoveEdge}\xspace}
\newcommand{\wfconv}{\textsc{WFConV}\xspace}
\newcommand{\wfcone}{\textsc{WFConE}\xspace}
\newcommand{\wlocv}{\textsc{WFLocV}\xspace}
\newcommand{\nwloce}{\textsc{LocE}\xspace}
\newcommand{\locuv}{\textsc{LocateUV}\xspace}
\newcommand{\wfremv}{\textsc{WFRemV}\xspace}
\newcommand{\wfaddv}{\textsc{WFAddV}\xspace}
\newcommand{\wfadde}{\textsc{WFAddE}\xspace}
\newcommand{\wfreme}{\textsc{WFRemE}\xspace}
\newcommand{\haddv}{\textsc{HelpAddV}\xspace}
\newcommand{\hremv}{\textsc{HelpRemV}\xspace}
\newcommand{\hconv}{\textsc{HelpConV}\xspace}
\newcommand{\wohv}{\textsc{ConV}\xspace}
\newcommand{\wohe}{\textsc{ConE}\xspace}
\newcommand{\hadde}{\textsc{HelpAddE}\xspace}
\newcommand{\hreme}{\textsc{HelpRemE}\xspace}
\newcommand{\hcone}{\textsc{HelpConE}\xspace}
\newcommand{\hlcv}{\textsc{HelpLocV}\xspace}
\newcommand{\hgds}{\textsc{HelpGphDS}\xspace}
\newcommand{\mxph}{\textsc{MaxPhase}\xspace}
\newcommand{\datastruct}{data-structure\xspace}
\newcommand{\enext}{{\tt enext}\xspace}
\newcommand{\vnext}{{\tt vnext}\xspace}
\newcommand{\phase}{{\tt phase}\xspace}
\newcommand{\optype}{{\tt opType}\xspace}
\newcommand{\type}{{\tt type}\xspace}
\newcommand{\vkey}{{\tt vkey}\xspace}
\newcommand{\ekey}{{\tt ekey}\xspace}
\newcommand{\vnod}{{\tt vnode}\xspace}
\newcommand{\enod}{{\tt enode}\xspace}
\newcommand{\vsrc}{{\tt vsrc}\xspace}
\newcommand{\vdest}{{\tt vdest}\xspace}
\newcommand{\ODA}{{\tt ODA}\xspace}
\newcommand{\state}{\texttt{state}\xspace}
\newcommand{\stat}{\texttt{state}\xspace}
\newcommand{\pointv}{{\tt pointv}\xspace}
\newcommand{\wfree}{wait-free\xspace}
\newcommand{\fpsp}{fast-path-slow-path\xspace}
\newcommand{\maxphase}{\texttt{maxph}\xspace}
\newcommand{\vntp}{{\tt VERTEX NOT PRESENT}\xspace}
\newcommand{\entp}{{\tt EDGE NOT PRESENT}\xspace}
\newcommand{\ventp}{{\tt VERTEX OR EDGE NOT PRESENT}\xspace}
\newcommand{\ep}{{\tt EDGE PRESENT}\xspace}
\newcommand{\eap}{{\tt EDGE ALREADY PRESENT}\xspace}
\newcommand{\eadd}{{\tt EDGE ADDED}\xspace}
\newcommand{\er}{{\tt EDGE REMOVED}\xspace}
\newcommand{\success}{{\tt success}\xspace}
\newcommand{\failure}{{\tt failure}\xspace}
\newcommand{\maxfail}{{\tt MAX\_FAIL}\xspace}
\newcommand{\isMarked}{\textsc{isMrkd}\xspace}
\newcommand{\MarkedRef}{\textsc{MrkdRf}\xspace}
\newcommand{\unMarkedRef}{\textsc{UnMrkdRf}\xspace}
\begin{document}

\title{An Efficient Practical Concurrent Wait-Free Unbounded  Graph
}


\author{
     Sathya Peri$^1$, Chandra Kiran Reddy$^2$, Muktikanta Sa$^3$ \\
      Department of Computer Science \& Engineering \\
      Indian Institute of Technology Hyderabad, India \\
      \{$^1$sathya\_p, $^2$cs15btech11012,  $^3$cs15resch11012\}@iith.ac.in
}

\date{}

\maketitle

\begin{abstract}
In this paper, we propose an efficient concurrent wait-free algorithm to construct an unbounded directed graph for shared memory architecture. To the best of our knowledge that this is the first wait-free algorithm for an unbounded directed graph where insertion and deletion of vertices and/or edges can happen concurrently. To achieve wait-freedom in a dynamic setting, threads help each other to perform the desired tasks using operator descriptors by other threads. To enhance performance,  we also developed an optimized \wf graph based on the principle of \fpsp. We also prove that all graph operations are wait-free and linearizable. We implemented our algorithms in C++ and tested its performance through several micro-benchmarks. Our experimental results show an average of $9$x improvement over the global lock-based implementation.

\textbf{keywords: }{concurrent data structure \and lazy-list \and directed graph \and locks \and lock-free \and wait-free \and fast-path-slow-path}
\end{abstract}

\section{Introduction}
\label{sec:intro}

Graphs are very useful structures that have a wide variety of applications. They usually represented as pairwise relationships among objects, with the objects as vertices and relationships as edges. They are applicable in various research areas such as social networking (facebook, twitter, etc.), semantic, data mining, image processing, VLSI design, road network, graphics, blockchains and many more.
 
In many of these applications, the graphs are very \textit{large} and \textit{dynamic} in nature, that is, they undergo changes over time like the addition and removal of vertices and/or edges\cite{Demetrescu+:DynGraph::book:2004}. Hence, to precisely model these applications, we need an efficient \ds which supports dynamic changes and can expand at run-time depending on the availability of memory in the machine.

Nowadays, with multicore systems becoming ubiquitous \emph{concurrent data-structures (\cds)} \cite{Maurice+:AMP:book:2012} have become popular. \cds such as concurrent stacks, queues, hash-tables, etc. allow multiple threads to operate on them concurrently while maintaining correctness, \emph{\lbty} \cite{Herlihy+:lbty:TPLS:1990}. These structures can efficiently harness the power multi-core systems. Thus, a multi-threaded concurrent unbounded graph data-structure can effectively model dynamic graphs as described above. 

The \cc for \cds is \lbty \cite{Herlihy+:lbty:TPLS:1990} which ensures that the affect of every \mth seems to take place somewhere at some atomic step between the invocation and response of the \mth. The atomic step referred to as a \emph{linearization point (\lp)}.  Coming to progress conditions, a \mth of a \cds is \emph{\wf} \cite{Herlihy+:OnNatProg:opodis:2001, Maurice+:AMP:book:2012} if it ensures that the \mth finishes its execution in a finite number of steps. A \emph{\lf} \cds ensures, at least one of its \mth{s} is guaranteed to complete in a finite number of steps. A \lf algorithm never enters deadlocks but can possibly starve. On the other hand, \wf algorithms are starvation-free. In many of the \wf and \lf algorithms proposed in the literature, threads help each other to achieve the desired tasks. 



Concurrent graph data-structures have not been explored in detail in the literature with some work appearing recently \cite{Kallimanis+:WFGraph:opodis:2015,Chatterjee+:NbGraph:ICDCN-19}. Applications relying on graphs mostly use a sequential implementation while some parallel implementations synchronize using global locks which causes severe performance bottlenecks.


In this paper, we describe an efficient practical concurrent \wf unbounded directed graph (for shared memory system) which supports concurrent insertion/deletion of vertices and edges while ensuring \lbty \cite{Herlihy+:lbty:TPLS:1990}. The algorithm for \wf concurrent graph \ds is based on the \nbk graph by Chatterjee et al. \cite{Chatterjee+:NbGraph:ICDCN-19} and \wf algorithm proposed by Timnat et al.\cite{Timnat:WFLis:opodis:2012}. Our implementation is not a straightforward extension to lock-free/wait-free list implementation but has several non-trivial key supplements. This can be seen from the \lp{s} of edge methods which in many cases lie outside their method and depend on other graph operations running concurrently. We believe the design of the graph data-structure is such that it can help to identify other useful properties on a graph such as reachability, cycle detection, shortest path, betweenness centrality, diameter, etc. 


To enhance performance,  we also developed an optimized \wf graph based on the principle of \fpsp \cite{Kogan+:fpsp:ppopp:2012}. The basic idea is that the lock-free algorithms are fast as compare to the wait-free algorithms in practice. So, instead of always executing in the \wf manner (slow-path), threads normally execute \mth{s} in the \lf manner (fast-path). If a thread executing a \mth in the \lf manner, fails to complete in a certain threshold number of iterations, switches to \wf execution and eventually terminates.

\subsection{Contributions}
In this paper, we present an efficient practical concurrent \wf unbounded directed graph \ds. The main contributions of our work are summarized below:
\begin{enumerate}
	\item We describe an Abstract Data Type (ADT) that maintains a \wf directed graph $G = (V,E)$. It comprises of the following \mth{s} on the sets $V$ and $E$: (1) Add Vertex: $\wfaddv$ (2) Remove Vertex: $\wfremv$, (3) Contains Vertex: $\wfconv$ (4) Add Edge: $\wfadde$ (5) Remove Edge: $\wfreme$ and (6) Contains Edge: $\wfcone$. The \wf graph is represented as an adjacency list similar in \cite{Chatterjee+:NbGraph:ICDCN-19} (\secref{graph-ds}). 
	
	\item We implemented the directed graph in a dynamic setting with threads helping each other using operator descriptors to achieve \wfdm (\secref{wf-algo}).
	
	\item We also extended the \wf graph to enhance the performance and achieve a fast \wf graph based on the principle of \fpsp proposed by Kogan et al.\cite{Kogan+:fpsp:ppopp:2012} (\secref{fpsp-algo}). 
	
	\item Formally, we prove for the correctness by showing the operations of the concurrent graph \ds are \lble \cite{Herlihy+:lbty:TPLS:1990}. We also prove the \wf progress guarantee of the operations $\wfaddv$, $\wfremv$, $\wfconv$, $\wfadde$, $\wfreme$, and $\wfcone$ (\secref{proof}).
	
	\item  We evaluated the \wf algorithms in C++ implementation and tested through several micro-benchmarks. Our experimental results show on an average of $9$x improvement over the sequential and global lock implementation (\secref{results}).
\end{enumerate}

\subsection{Related Work}
Kallimanis and Kanellou \cite{Kallimanis+:WFGraph:opodis:2015} presented a concurrent graph that supports \wf edge updates and traversals. They represented the graph using adjacency matrix, with an upper bound on number of vertices. As a result, their graph data-structure does not allow any insertion or deletion of vertices after initialization of the graph. Although this might be useful for some applications such as road networks, this may not be adequate for many real-world applications which need dynamic modifications of vertices as well as unbounded graph size. 

A recent work by Chatterjee et al. \cite{Chatterjee+:NbGraph:ICDCN-19} proposed a non-blocking \cgds which allowed multiple threads to perform dynamic insertion and deletion of vertices and/or edges in \lf manner. Our paper extends their \ds while ensuring that all the graph operations are \wf. 

\section{System Model}
\label{sec:model}
\vspace{1mm}
\noindent
\textbf{The Memory Model.} We consider an asynchronous shared-memory model with a finite set of $p$ processors accessed by a finite set of $n$ threads. The threads communicate with each other by invoking atomic operations on the shared objects such as atomic \texttt{read}, \texttt{write}, \texttt{\faa} (\FAA)  and \texttt{\cas} (\CAS) instructions. 



An $\FAA{(x, a)}$ instruction atomically increments the value at the memory location ${x}$ by the value $a$. Similarly, a \texttt{CAS}${(x, a, a')}$ is an atomic instruction that checks if the current value at a memory location ${x}$ is equivalent to the given value ${a}$, and only if true, changes the value of ${x}$ to the new value ${a'}$ and returns \tru; otherwise the memory location remains unchanged and the instruction returns \fal. Such a system can be perfectly realized by a Non-Uniform Memory Access (NUMA) computer with one or more multi-processor CPUs.

\vspace{1mm}
\noindent
\textbf{Correctness.} We consider \textit{\lbty} introduced by Herlihy \& Wing \cite{Herlihy+:lbty:TPLS:1990} as the correctness criterion for the graph operations. We assume that the execution generated by a \ds is a collection of \mth invocation and response events. Each invocation of a method call has a subsequent response. An execution is \lble if it is possible to assign an atomic event as a \emph{linearization point} (\emph{\lp}) inside the execution interval of each \mth such that the result of each of these \mth{s} is the same as it would be in a sequential execution in which the \mth{s} are ordered by their \lp{s} \cite{Herlihy+:lbty:TPLS:1990}. 

\vspace{1mm}
\noindent
\textbf{Progress.} The progress properties specify when a thread invoking operations on the shared memory objects completes in the presence of other concurrent threads. In this context, we provide the graph implementation with \mth{s} that satisfy wait-freedom, based on the definitions in Herlihy and Shavit \cite{Herlihy+:OnNatProg:opodis:2001}. A \mth of a \cds is \wf if it completes in finite number of steps. A \ds implementation is \wf if all its \mth{s} are \wf. This ensures per-thread progress and is the most reliable non-blocking progress guarantee in a concurrent system. A \ds is \lf if its \mth{s} get invoked by multiple concurrent threads, then one of them will complete in finite number of steps. 


\section{The Underlying Graph Data-structure}
\label{sec:graph-ds}

 In this section, we give a detailed construction of the graph \ds, which is a combination of non-blocking graph based on \cite{Chatterjee+:NbGraph:ICDCN-19} and \wf construction based on \cite{Timnat:WFLis:opodis:2012, Timnat+:WFDS:ppopp:2014}. We represent the concurrent directed graph as an adjacency list representations. Hence, it is constructed as a collection set of vertices stored as linked-list manner wherein each vertex also holds a list of neighboring vertices which it has outgoing edges. 



\begin{figure}[!t]
	\captionsetup{font=footnotesize}
	\begin{footnotesize}
			\begin{tabbing}
		\hspace{0.1in} \=  \hspace{0.1in} \= \\
			\> {\bf class \vnode \{} \\
			\> \> \texttt{int}  $\vkey$; // immutable key field \\
			\> \> {\vnode~ \vnext;} // atomic refe., pointer to the next \vnode\\
			\> \> {\enode~ \enext;} // atomic ref., pointer to the \elist \\   
			\> \} 
			\\
			\> {\bf class \enode \{} \\
			\> \> \texttt{int}  $\ekey$;   //  immutable key field\\
			\> \> \texttt{\vnode}  $\pointv$; // pointer from the \enode to its \vnode.  \\
			\> \> {\enode~ \enext;} // atomic ref., pointer to the next \enode \\
			\> \}\\
			\> {\bf class \ODA \{} \\
			\> \> \texttt{unsigned long}  $\phase$; // phase number of each operation.\\
			\> \> \texttt{\optype} \hspace{0.09in} $\type$; // type of the operation.\\
			\> \> \texttt{\vnode}  \hspace{0.18in}$\vnod$; // pointer to the \vnode. \\
			\> \> {\enode~ \hspace{0.15in}$\enod$;} // pointer to the \enode.\\
			\> \> \texttt{\vnode} \hspace{0.15in} $\vsrc,\vdest$; // pointer to the source and destination \vnode \\ 
			\> \}\\
			\\
			\> {\vnode \vh, \vt}; // Sentinel nodes for the \vlist \\
			\> {unsigned long \maxphase}; // atomic variable which keeps track of op. number. \\
			\> {\ODA} {\state[]}; // global state array for posting operations, array size is \\
			\hspace{1.0in}same as number of threads\\
		\end{tabbing}
		\vspace{-0.2in}
	\end{footnotesize}
	\setlength{\belowcaptionskip}{-10pt}
	\caption{Structure of \enode, \vnode and \ODA.}
	\label{fig:struct-evnode}
\end{figure}
The \vnode, \enode and \ODA structures depicted in \figref{struct-evnode}.  The \vnode consists of two atomic pointers \vnext and \enext and an immutable key \vkey. The \vnext points to the next \vnode in the \vlist, whereas, \enext points to the head of the \elist which is the list of outgoing neighboring vertices. Similarly, an \enode also has an atomic pointer \enext  points to the next \enode in the \elist and a pointer \pointv points to the corresponding \vnode. Which helps direct access to its \vnode while doing any traversal like BFS, DFS and also helps to delete the incoming edges, detail regarding this described in \secref{wf-algo}.   We assume that all the vertices in the \vlist have unique identification key which captured by \vkey field. Similarly, all the edge nodes for a vertex in the \elist have unique identification key which captured by \ekey field. 

Besides \vnodes and \enodes, we also have an \state array with an operation-descriptor(\ODA) for each thread. Each thread's \state entry recounts its current state. An \ODA composed of six fields: a phase number
\phase, an operation type \type indicates the current operation executed by this thread, possible operation types given at the end of this section, a pointer to the vertex node \vnod, which used for any vertex operations, a pointer to the edge node \enod, which is used for any edge operations, and a pair of \vnodes pointers \vsrc and \vdest, used for any edge operations to store the source and destination \vnodes of an edge.

\begin{figure}[H]
	\centerline{\scalebox{0.75}{\input{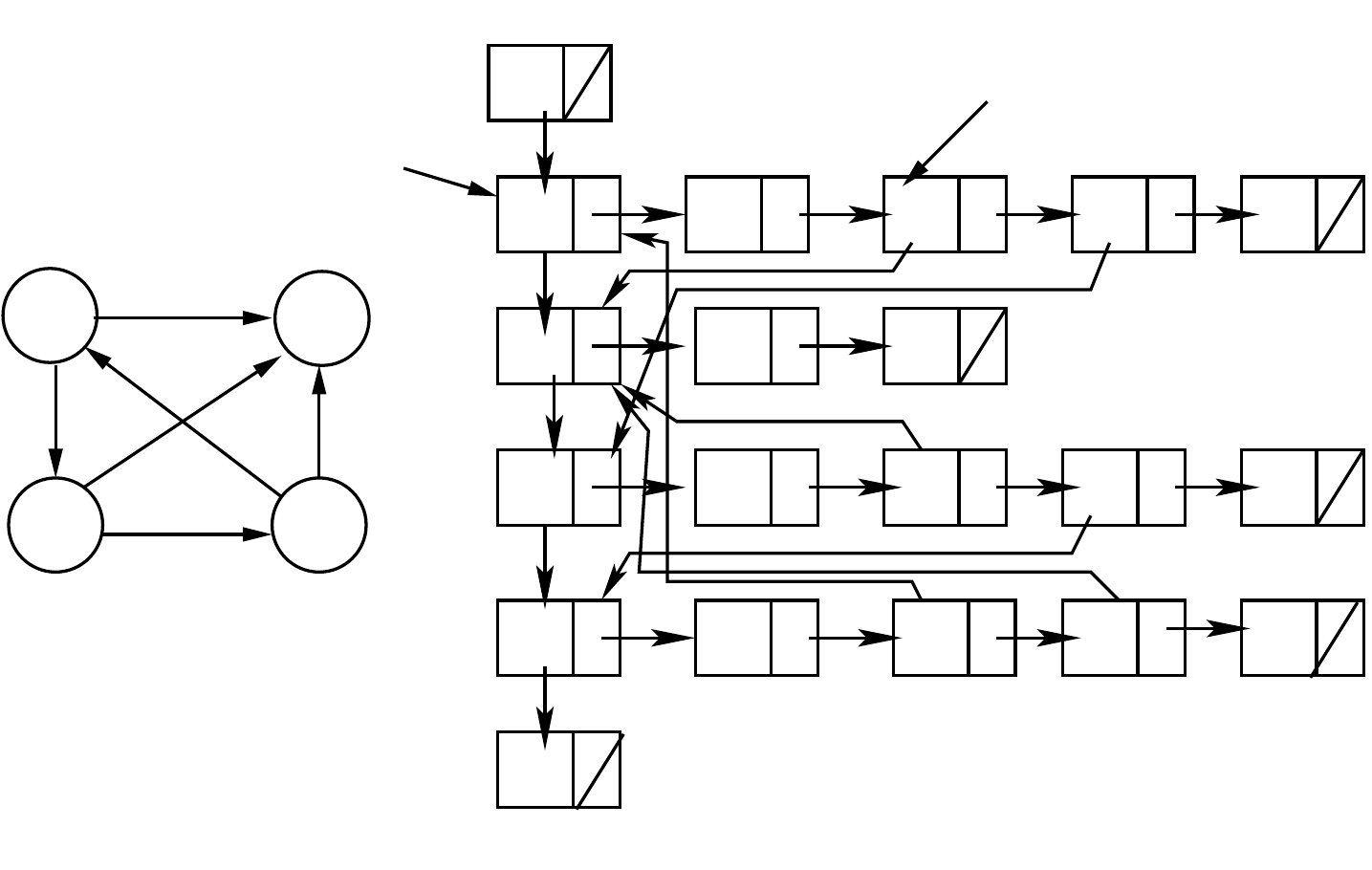_t}}}
		\setlength{\belowcaptionskip}{-10pt}
	\caption{(a) A directed Graph (b) The \wf graph representation for (a).}
	\label{fig:conGraph}
\end{figure}

Our wait-free concurrent directed graph \ds supports six major operations: \wfaddv, \wfremv, \wfconv, \wfadde, \wfreme, and \wfcone. We use the helping mechanism like \cite{Timnat:WFLis:opodis:2012, Kogan+:WFQue:ppopp:2011, Timnat+:WFDS:ppopp:2014} to achieve the wait-freedom for all our graph operations. Before begin to execute an operation, a thread starts invoking a special state array \state similar as Timnat et al. \cite{Timnat:WFLis:opodis:2012}. This \state shared among the threads. All threads can see the  details of the operation they are running during their execution. Whenever an operation starts execution it publishes its operation in the \state array, then all other threads try to help to finish its execution. When the operation finished its execution, the outcome result is also announced to the \state array, using a \CAS, which substitutes the old existing \type to the new one. 

We initialize the \vlist with dummy head(\vh) and tail(\vt) (called sentinels) with values $\text{-}\infty$ and $\infty$ respectively. Similarly, each \elists is also initialized with a dummy head(\eh) and tail(\et)(see \figref{conGraph}).

Our \wf graph \ds maintains some \textit{invariants}: (a) the \vlist is sorted based on the \vnode's key value $\vkey$ and each unmarked \vnode is reachable from the \vh, and (b) also each \elists are sorted based on the \enode's key value $\ekey$ and unmarked \enodes are reachable from the \eh of the corresponding \vnode.

\subsection{The Abstract Data Type(ADT)}
A \wf graph is defined as a directed graph $G = (V, E)$, where $V$ is the set of vertices and $E$ is the set of directed edges. Each edge in $E$ is an ordered pair of vertices belonging to $V$. A vertex $v$ $\in V$ has an immutable unique key $\vkey$ denoted by $v(\vkey)$. A directed edge from the vertex $v(\ekey_1)$ to $v(\ekey_2)$ is denoted as $e(v(\ekey_1), v(\ekey_2))$ $\in E$. For simplicity, we denote $e(v(\ekey_1), v(\ekey_2))$ as $e(\ekey_2)$, which means the $v(\ekey_1)$ has a neighbouring vertex $v(\ekey_2)$. 
We also defined an ADT for operations on $G$ which are exported by the \wf graph \ds, are given bellow:

\begin{enumerate}
\item The $\wfaddv(\vkey)$ operation adds a vertex $v(\vkey)$ to the graph, only if $v(\vkey) \notin V$ and then returns \tru, otherwise it returns \fal. 
\item The $\wfremv(\vkey)$ operation deletes a vertex $v(\vkey)$ from $V$, only if $v(\vkey) \in V$ and then returns \tru, otherwise it returns \fal. Once a vertex $v(\vkey)$ is deleted successfully all its outgoing and incoming edges also removed.
\item The $\wfconv(\vkey)$ operation returns \tru, if $v(\vkey) \in V$; otherwise returns \fal.
\item The $\wfadde(\ekey_1, \ekey_2)$ adds an edge $e(v(\ekey_1), v(\ekey_2))$ to $E$, only if $e(v(\ekey_1)$, $v(\ekey_2))$ $\notin$ $E$ and $v(\ekey_1) \in V$ and $v(\ekey_2) \in V$ then it returns \eadd. If $v(\ekey_1) \notin V$ or $v(\ekey_2) \notin V$, it returns \vntp. If $e(v(\ekey_1), v(\ekey_2)) \in E$, it returns \eap.	
\item The $\wfreme(\ekey_1, \ekey_2)$ deletes the edge $e(v(\ekey_1)$, $v(\ekey_2))$ from $E$, only if $e(v(\ekey_1)$, $v(\ekey_2))$ $\in E$ and $v(\ekey_1)$ $\in V$ and $v(\ekey_2)$ $\in V$ then it returns \er. If $v(\ekey_1)$ $\notin V$ or $v(\ekey_2)$ $\notin V$, it returns \vntp. If $e(v(\ekey_1)$, $v(\ekey_2))$ $\notin E$, it returns \entp.
\item The $\wfcone(\ekey_1,\ekey_2)$ if $e(v(\ekey_1)$, $v(\ekey_2))$ $\in E$ and $v(\ekey_1) \in V$ and $v(\ekey_2) \in V$ then it returns \ep, otherwise it returns \ventp.
\item The $\hgds(\phase)$ operation ensures that each thread completes its own operation and helps in completing all the pending operations with lower \phase numbers.
\end{enumerate}
When the operations \wfaddv, \wfremv, \wfconv, \wfadde, \wfreme, and \wfcone start execution they will get a new \phase number and post their operation in the \state array and then invoke the $\hgds(\phase)$
(Help Graph \ds)operation. All the helping methods check if their \phase number is the same as the thread's \phase number in the \state array, otherwise, they return \fal.


The possible operation type for the \ODA state values are:
\begin{enumerate}
    \item \taddv: requested for help to insert a \vnode into the \vlist
    \item \tremv: requested for help to delete a \vnode from the \vlist.
    \item \tconv : requested for help to find a \vnode, from the \vlist.
   \item \tadde: requested for help to insert an \enode into an \elist.
    \item \treme: requested for help to delete the \enode from an \elist.
    \item \tcone : requested for help to find the \enode from an \elist.
    \item \success: if any of the graph operations finished its execution successfully.
    \item \failure: if any of the graph operation unable to finish its execution.
\end{enumerate}
The first six states used to ask for help from other threads whereas the last two states  thread does not ask for any help from other threads.


\section{The Wait-free Graph Algorithm}
\label{sec:wf-algo}
In this section, we present the technical details of all \wf graph operations. The design of \wf graph \ds based on the adjacency list representation. Hence, it is implemented as a collection (list) of vertices wherein each vertex holds a list of vertices to which it has outgoing edges. The implementation is a linked list of \vnodes and \enodes as shown in \figref{conGraph}. The implementation of each of these lists based on the \nbk graph \cite{Chatterjee+:NbGraph:ICDCN-19} and  \wf  construction based on \cite{Timnat:WFLis:opodis:2012, Timnat+:WFDS:ppopp:2014} and \nblist concurrent-set \cite{Fomitchev+:LFList:podc:2004, Harris:NBList:disc:2001, Michael:LFHashList:spaa:2002, Valois:LFList:podc:1995}. The \wf graph algorithm depicted in \figref{wf-methods1}, \ref{fig:wf-methods2}, \ref{fig:wf-methods3}, and \ref{fig:wf-methods5}.

\noindent \textbf{Pseudo-code convention:}
We use $p.x$ to access the member field $x$ of a class object pointer $p$.
To return multiple variables from an operation we use $\langle x_1, x_2,\ldots,x_n \rangle$. To avoid the overhead of another field in the node structure, we use bit-manipulation: last one significant bit of a pointer $p$. In case of an x86\text{-}64 bit architecture, memory has a 64-bit boundary and the last three least significant bits are unused. So, we use the last one significant bit of the pointer. We define three methods $\isMarked(p)$ return \tru if last significant bit of pointer $p$ is set to $1$, else, it returns \fal, $\MarkedRef(p)$, $\unMarkedRef(p)$ sets last significant bit of the pointer $p$ to $1$ and $0$ respectively. An invocation of $\createv(\vkey)$ creates a new \vnode with key $\vkey$. Similarly, an invocation of $\createe(\ekey)$ creates a new \enode with key $\ekey$. For a newly created \vnode and \enode the pointer fields are initialised with \nul value.

\begin{figure*}[]
	\begin{subfigure}[t]{.52\textwidth}
	
\begin{algorithmic}[1]
\scriptsize
\renewcommand{\algorithmicprocedure}{\textbf{Operation}}
	\Procedure{ $\mxph()$}{}\label{mxphstart}
	 \State{$\maxphase.\fadd(1)$;}
	 \State{$return$ $\maxphase$;}
	\EndProcedure\label{mxphend}
	\algstore{maxph}
\end{algorithmic}
\hrule

\begin{algorithmic}[1]
	\algrestore{maxph}
\renewcommand{\algorithmicprocedure}{\textbf{Operation}}
\scriptsize
	\Procedure{ \hgds($phase$)}{}\label{hgdsstart}
	    \For{($tid \gets 0$ $to $ $\stat.end()$)}
	    \State{$\ODA$ $desc\gets \stat[tid]$;}
	    \If{($desc.\phase \leq phase$)}
	    \If{($desc.\type = \taddv$)}
	    \State{$\haddv(tid, desc.\phase)$;}
	    \ElsIf{($desc.\type = \tremv$)}
	    \State{$\hremv(tid,desc.\phase)$;}
	    \ElsIf{($desc.\type = \tadde$)}
	    \State{$\hadde(tid,desc.\phase)$;}
	    \ElsIf{($desc.\type = \treme$)}
	    \State{$\hreme(tid,desc.\phase)$;}
	    \ElsIf{($desc.\type = \tconv$)}
	    \State{$\hconv(tid,desc.\phase)$;}
	    \ElsIf{($desc.\type = \tcone$)}
	    \State{$\hcone(tid,desc.\phase)$;}
	    \EndIf
	    \EndIf
	    \EndFor
	\EndProcedure\label{hgdsend}
	\algstore{hgds}
\end{algorithmic}	
	    \hrule
	  
\begin{algorithmic}[1]
    \renewcommand{\algorithmicprocedure}{\textbf{Operation}}
	\algrestore{hgds}
	\scriptsize
	\Procedure{ $\wfaddv$($key$)}{}\label{wfaddvstart}
	\State{$tid \gets ThreadID.get()$};
	\State{$phase \gets\mxph()$};
	\State{$nv \gets new$  $\vnode(key)$;}
	\State{$\ODA$ $\operation \gets$ $new$ $\ODA(phase, \taddv,nv);$}
	\State{$\stat[tid] \gets \operation$;}
	\State{$\hgds(phase)$;}
	\If{$(\stat[tid].\type = \success)$}
	\State{$return$ $\tru$;}
	\Else
	\State{$return$ $\fal$;}
	\EndIf
	\EndProcedure\label{wfaddvend}
	\algstore{wfaddv}
    \end{algorithmic}
    \hrule
    
\begin{algorithmic}[1]
    \renewcommand{\algorithmicprocedure}{\textbf{Operation}}
	\algrestore{wfaddv}
	\scriptsize
	\Procedure{ $\wfremv$($key$)}{}\label{wfremvstart}
	\State{$tid \gets ThreadID.get()$};
	\State{$phase \gets \mxph()$};
	\State{$nv \gets new$  $\vnode(key)$;}
	\State{$\ODA$ $\operation \gets$ $new$ $\ODA(phase, \tremv,nv);$}
	\State{$\stat[tid] \gets \operation$;}
	\State{$\hgds(phase)$;}
	\If{($\stat[tid].\type = \success$)}
	\State{$return$ \tru;}
	\Else
	\State{$return$ \fal;}
	\EndIf
	\EndProcedure\label{wfremvend}
	\algstore{wfremv}
    \end{algorithmic}
    \hrule
   	\end{subfigure}
\begin{subfigure}[t]{.48\textwidth}
\begin{algorithmic}[1]
    \renewcommand{\algorithmicprocedure}{\textbf{Operation}}
	\algrestore{wfremv}
	\scriptsize
	\Procedure{$\wfconv$($key$)}{}\label{wfconvstart}
	\State{$tid \gets ThreadID.get()$};
	\State{$phase \gets \mxph()$};
	\State{$nv \gets new$  $\vnode(key)$;}
	\State{$\ODA$ $\operation \gets$ $new$ $\ODA(phase, \tconv,$ $nv);$}
	\State{$\stat[tid] \gets \operation$;}
	\State{$\hgds(phase)$;}
	\If{($\stat[tid].\type = \success$)}
	\State{$return$ $\tru$;}
	\Else
	\State{$return$ $\fal$;}
	\EndIf
	\EndProcedure\label{wfconvend}
	\algstore{wfconv}
\end{algorithmic}
\hrule
\begin{algorithmic}[1]
    \renewcommand{\algorithmicprocedure}{\textbf{Operation}}
	\algrestore{wfconv}
	\scriptsize
	\Procedure{$\wfadde$($key_1,key_2$)}{}\label{wfaddestart}
	\State{$tid \gets ThreadID.get()$};
	\State{$phase \gets \mxph()$};
	\State{$\langle v_1,v_2,flag\rangle \gets \locuv(key_1, key_2$);}\label{lin:wfadde-locuv}
    \If {($flag$ = $\fal$ $\bigvee$ $\isMarked(v_1)$ $\bigvee$ $\isMarked(v_2)$)}
    \State {$return$ \vntp }
    \EndIf
	\State{$ne \gets new$  $\enode(key_2)$;}
	\State{$\ODA$ $\operation$ $\gets new$ $\ODA$ $(phase,$ $\tadde,ne,$ $v_1,v_2);$}
	\State{$\stat[tid] \gets op$;} 
	\State{$\hgds(phase)$;}
	\If{($\stat[tid].type = \success$)}
	\State{$return$ \eadd;}
	\Else
	\State{$return$ \eap;}
	\EndIf
	\EndProcedure\label{wfaddeend}
	\algstore{wfadde}
\end{algorithmic}
\hrule
\begin{algorithmic}[1]
    \renewcommand{\algorithmicprocedure}{\textbf{Operation}}
	\algrestore{wfadde}
	\scriptsize
	\Procedure{$\wfreme$($key_1,key_2$)}{}\label{wfremestart}
	\State{$tid \gets ThreadID.get()$};
	\State{$phase \gets \mxph()$};
	\State{$ \langle v_1,v_2,flag\rangle \gets \locuv (key_1, key_2$);}
     \If {($flag$ = $\fal$ $\bigvee$ $\isMarked(v_1)$ $\bigvee$ $\isMarked(v_2)$)}
    \State {$return$ \vntp; }
    \EndIf
    \State{$ne \gets new$  $\enode(key_2)$;}
	\State{$\ODA$ $\operation$ $\gets new$ $\ODA$ $(phase,$ $\treme,ne,$ $v_1,v_2);$}
	\State{$\stat[tid] \gets op$;}
	\State{$\hgds(phase)$;} 
	\If{($\stat[tid].type = \success$)}
	\State{$return$ \er;}
	\Else
	\State{$return$ \entp;}
	\EndIf
	\EndProcedure\label{wfremeend}
	\algstore{wfreme}
\end{algorithmic}
\hrule
	\end{subfigure}
	\setlength{\belowcaptionskip}{-15pt}
	\caption{Pseudo-codes of \wfaddv, \wfremv, $\wfconv$ and \wfcone}\label{fig:wf-methods1}
\end{figure*}

\subsection{The Helping Procedure}
When an operation is invoked by a thread to start its execution, at first, it chooses a unique \phase number which is the higher than all previously chosen phase numbers by other threads. The main objective of assigning a unique \phase number is to help the operations with lower \phase number. This means whenever a thread started its execution with a new \phase number, it tries to help other unfinished operations whose \phase number is lower. Which allows all operations to get help to finish their execution to ensures the starvation-free. The phase selection procedure, in Line  \ref{mxphstart} to \ref{mxphend}, executed by reading the current phase number and then atomically increments the \maxphase, using an \FAA. Once the \phase numbers is chosen, the thread publishes the operation in the \state array by updating its entry. Then it invokes the \hgds(in Line \ref{hgdsstart} to \ref{hgdsend}) procedure where it
traverses through the \state array and tries to help the operations whose \phase number is lower than or equal to its, which ensures the unfinished operations gets help from other threads to finish the execution. This ensures wait-freedom.

\begin{figure*}[]
	\begin{subfigure}[t]{.48\textwidth}
	
\begin{algorithmic}[1]
    \renewcommand{\algorithmicprocedure}{\textbf{Operation}}
	\algrestore{wfreme}
	\scriptsize
	\Procedure{ $\wfcone$($key_1,key_2$)}{}\label{wfconestart}
    \State{$tid \gets ThreadID.get()$};
	\State{$phase \gets \mxph()$};
	\State{$ \langle v_1,v_2,flag\rangle \gets \locuv (key_1, key_2$);}
    \If {($flag$ = $\fal$ $\bigvee$ $\isMarked(v_1)$ $\bigvee$ $\isMarked(v_2)$)}
    \State {$return$ \vntp; }
    \EndIf
    \State{$ne \gets new$  $\enode(key_2)$;}
	\State{$\ODA$ $\operation$ $\gets new$ $\ODA$ $(phase,$ $\tcone,ne,$ $v_1,v_2);$}
	\State{$\stat[tid] \gets op$;}
	\State{$\hgds(phase)$;} 
	\If{($\stat[tid].\type = \success$)}
	\State{$return$ \ep;}
	\Else
	\State{$return$ \entp;}
	\EndIf
	\EndProcedure\label{wfconeend}
	\algstore{wfcone}
    \end{algorithmic}
    \hrule

\begin{algorithmic}[1]
    \renewcommand{\algorithmicprocedure}{\textbf{Operation}}
	\algrestore{wfcone}
	\scriptsize
	\Procedure{ $\hconv$($tid,phase$)}{}\label{hconvstart}
	\State{$\ODA$ $\operation$ $\gets \stat[tid]$;}
	\If{($\neg(op.\type = \tconv \land op.\phase = phase)$)}
	\State{$return$;}
    \EndIf
    \State{$v_1\gets op.\vnod$;}
    \State{$\langle pred,curr \rangle \gets$ $\wlocv(v_1.\vkey)$;}
    \If{$(curr.\vkey = v_1.\vkey)$ $\wedge$ $(\neg\isMarked$ $(curr.\vnext))$}
	\State{$\ODA$ $succ$ $\gets new$ $\ODA$ $(phase,$ $\success);$}
    \State{($CAS(\stat[tid], op, succ)$);}
    \State{$return$;}
    \Else
	\State{$\ODA$ $fail$ $\gets new$ $\ODA$ $(phase,$ $\failure);$}
    \State{($CAS(\stat[tid], op, fail)$);}
    \State{$return$;}
    \EndIf
	\EndProcedure\label{hconvend}
	\algstore{hconv}
    \end{algorithmic}
    \hrule
   	\end{subfigure}
\begin{subfigure}[t]{.52\textwidth}
\begin{algorithmic}[1]
    \renewcommand{\algorithmicprocedure}{\textbf{Operation}}
	\algrestore{hconv}
	\scriptsize
	\Procedure{$\haddv$($tid,phase$)}{}\label{haddvstart}
		\While{($\tru$)} \label{lin:search-again-haddv}
		\State{$\ODA$ $\operation \gets \stat[tid]$;}
		\If{($\neg(\optyp = \taddv \land \operation.\phase = phase)$)}
		\State{$return$;}
        \EndIf
        \State{$\vnode$ $v_1\gets \operation.\vnod$;}
        \State{$\vnode$ $v_2\gets v_1.\vnext$;}
        \State{$\langle pred,curr \rangle \gets$ $\wlocv(v_1.\vkey)$;}
        \If{($curr.\vkey = v_1.\vkey $)}
        \If{($curr = v_1$ $\vee$ $\isMarked($ $curr.\vnext$ $))$)}
	    \State{$\ODA$ $succ$ $\gets new$ $\ODA$ $(phase,\success$ $);$}
        \If{$\CAS(\stat[tid], \operation, succ)$}\label{lin:helpaddv-succ-cas1}
        \State{$return$;}
        \EndIf
        \Else
	    \State{$\ODA$ $fail$ $\gets new$ $\ODA$ $(phase,\failure$ $);$}
        \If{($\CAS(\stat[tid], \operation, fail)$)} \label{lin:helpaddv-fail-cas1}
        \State{$return$;}
        \EndIf
        \EndIf
        \Else
        \If{($(\isMarked(v_1.\vnext))$)}
	    \State{$\ODA$ $succ$ $\gets new$ $\ODA$ $(phase,\success$ $);$}
        \If{($\CAS(\stat[tid], \operation, succ)$)}\label{lin:helpaddv-succ-cas2}
        \State{$return$;}
        \EndIf
        \EndIf
        \State{$\CAS(v_1.\vnext, v_2, curr)$;}
        \If{($\CAS(pred.\vnext, curr, v_1)$)} \label{lin:helpaddv-succ-phyaddv-cas}
	    \State{$\ODA$ $succ$ $\gets new$ $\ODA$ $(phase,\success$ $);$}
        \If{($\CAS(\stat[tid], \operation, succ)$)}\label{lin:helpaddv-succ-cas3}
        \State{$return$;}
        \EndIf
        \EndIf
        \EndIf
		\EndWhile
	\EndProcedure\label{haddvend}
	\algstore{haddv}
\end{algorithmic}
\hrule
	\end{subfigure}
	\setlength{\belowcaptionskip}{-15pt}
	\caption{Pseudo-codes of \wfaddv, \wfremv, $\wfconv$ and \wfcone}\label{fig:wf-methods2}
\end{figure*}

\subsection{The Vertex Methods}
\label{sec:working-con-graph-methods:add}
The \wf vertex operations $\wfaddv$, $\wfremv$, and $\wfconv$ depicted in \figref{wf-methods1} and their corresponding helping procedures \haddv, \hremv, and \hconv depicted in \figref{wf-methods2}, and \ref{fig:wf-methods3}. If the vertex set keys are finite(up to available memory in the system), we also have an optimized case where the $\wfconv$ neither helps nor accepts any help from other threads. This because to achieve higher throughput, we assume $\wfconv$ is called more frequently than the $\wfaddv$ and $\wfremv$ operations, so it does not allow any help. Without being affected by each other, all the vertex operations are \wf.


A $\wfaddv(\vkey)$ operation invoked by passing the  $\vkey$ to be inserted, in Line \ref{wfaddvstart} to \ref{wfaddvend}. A \wfaddv operation starts by choosing a \phase number, creating a new \vnode by invoking $\createv(\vkey)$, posting its operation on the \state array. Then the thread calls $\hgds(\phase)$ to invoke the helping mechanism. After that, it traverses the \state array and helps all pending operations and tries to complete its operation. In the next step the same thread(or a helping thread) enters $\haddv(\phase)$ and verifies the \phase number and type of operation \optype, if they match with \taddv then it invokes $\hlcv(\phase)$ to traverse the \vlist until it finds a vertex with its key greater than or equal to $v(\vkey)$. In the process of traversal, it physically deletes all logically deleted \vnodes, using a \CAS. Once it reaches the appropriate location checks whether the $\vkey$ is already present. If the $\ekey$ is not present earlier it attempts a \CAS to add the new \vnode(\lineref{helpaddv-succ-phyaddv-cas}). On an unsuccessful \CAS, it retries. After the operation completes $\haddv(\phase)$ \success or \failure reported to the \state array. 
\begin{figure*}[]
	\begin{subfigure}[t]{.46\textwidth}
	
\begin{algorithmic}[1]
    \renewcommand{\algorithmicprocedure}{\textbf{Operation}}
	\algrestore{haddv}
	\scriptsize
	\Procedure{ $\hremv$($tid,phase$)}{}\label{hremvstart}
		\While{($\tru$)} \label{lin:search-again-hremv}
		\State{$\ODA$ $op\gets \stat[tid]$;}
		\If{($\neg(op.\type = \tremv \wedge op.$ $\phase = phase)$)}
		\State{$return$;}
        \EndIf
        \State{$\vnode$ $v_1\gets op.\vnod$;}
        \State{$\langle pred,curr \rangle \gets$ $\wlocv(v_1.\vkey)$;}
        \State{$cnext \gets curr.\vnext$}
        \If{($curr.\vkey \neq v_1.\vkey $)}
	    \State{$\ODA$ $fail$ $\gets new$ $\ODA$ $(phase,$ $\failure);$}
        \If{($\CAS(\stat[tid], op, fail)$)}
        \State{$return$;}
        \EndIf
        \Else
        \If{$(\neg\CAS(curr.\vnext, cnext, \MarkedRef$ $(cnext))) $}\label{lin:helpremv-succ-logical-cas}
       	 \State{$goto$ \lineref{search-again-hremv};}
		\EndIf
		 \If{($\neg(\CAS(pred.\vnext, curr, cnext))$ $))$}\label{lin:helpremv-succ-phy-cas}
       	 \State{$goto$ \lineref{search-again-hremv};}
		\EndIf
        \State{$\ODA$ $succ$ $\gets new$ $\ODA$ $(phase,$ $\success);$}
        \If{($\CAS(\stat[tid], op, succ)$)}
        \State{$return$;}
        \EndIf
        \EndIf
        \EndWhile
	\EndProcedure\label{hremvend}
	\algstore{wfreme}
    \end{algorithmic}
    \hrule
    
\begin{algorithmic}[1]
    \renewcommand{\algorithmicprocedure}{\textbf{Operation}}
	\algrestore{wfreme}
	\scriptsize
	\Procedure{ $\wohv$($key$)}{}\label{wohvstart}
	\State {$vnode$ $v \gets read(\vh)$;}
	\While{$(v.\vkey < key)$}
	\State {$v \gets \unMarkedRef(v.\vnext)$ ;}
	\EndWhile
	\State {return $(v.\vkey = key) \land (\isMarked(v) = \fal$ $)$};
	\EndProcedure\label{wohvend}
	\algstore{wohv}
    \end{algorithmic}
    \hrule
 	\end{subfigure}
\begin{subfigure}[t]{.54\textwidth}
\begin{algorithmic}[1]
    \renewcommand{\algorithmicprocedure}{\textbf{Operation}}
	\algrestore{wohv}
	\scriptsize
	\Procedure{$\hadde$($tid,phase$)}{}\label{haddestart}
		\While{($\tru$)} \label{lin:search-again-hadde}
		\State{$\ODA$ $op\gets \stat[tid]$;}
		\If{($\neg(op.\type = \tadde \land op.\phase = phase)$)}
		\State{$return$;}
        \EndIf
        \If {$(\isMarked(op.\vsrc)$ $\vee$ $\isMarked(op.\vdest$ $))$}  \label{lin:hadde-loceuv-marked}
	    \State{$\ODA$ $fail$ $\gets new$ $\ODA$ $(phase,$ $\failure);$}
        \State{$\CAS(\stat[tid], op, fail)$;} 
        \State{$return$;}
        \EndIf
        \State{$\vnode$ $v_1\gets \operation.\vsrc$;}
        \State{$\enode$ $e_2\gets op.\enod $, $e_3 \gets e_2.\enext$;}
        \State{$\langle pred,curr \rangle \gets \nwloce(v_1, e_2.\ekey)$;} \label{lin:hadde-newloce}
         \If{($curr.\ekey = e_2.\ekey $)} 
        \If{$(curr = e_2)$ $\vee$ $(\isMarked(curr.\enext$ $))$}
	    \State{$\ODA$ $succ$ $\gets new$ $\ODA$ $(phase,\success$ $);$}
        \If{($\CAS(\stat[tid], op, succ)$)}\label{lin:helpadde-succ-cas1}
        \State{$return$;}
        \EndIf
        \Else
	    \State{$\ODA$ $fail$ $\gets new$ $\ODA$ $(phase,\failure$ $);$} 
        \If{($\CAS(\stat[tid], op, fail)$)}  \label{lin:helpadde-fail-cas1}
        \State{$return$;}
        \EndIf
        \EndIf
        \Else
        \If{($\isMarked(e_2.\enext)$)}
	    \State{$\ODA$ $succ$ $\gets new$ $\ODA$ $(phase,\success$ $);$} 
        \If{($\CAS(\stat[tid], op, succ)$)}\label{lin:helpadde-succ-cas2}
        \State{$return$;}
        \EndIf
        \EndIf
        \State{$\CAS(e_2.\enext, e_3, curr)$;}
        \If{($\CAS(pred.\enext, curr, e_2)$)} \label{lin:helpadde-succ-wfadde-cas}
        \State{$e_2.\pointv \gets op.\vdest$;}
	    \State{$\ODA$ $succ$ $\gets new$ $\ODA$ $(phase,\success$ $);$} \label{lin:helpadde-succ-cas3}
       \If{($\CAS(\stat[tid], newOP, succ)$)}
        \State{$return$;}
        \EndIf
        \EndIf
        \EndIf
		\EndWhile
	
	\EndProcedure\label{haddeend}
	\algstore{hadde}
\end{algorithmic}
\hrule
	\end{subfigure}
	\setlength{\belowcaptionskip}{-15pt}
	\caption{Pseudo-codes of \wfaddv, \wfremv, $\wfconv$ and \wfcone}\label{fig:wf-methods3}
\end{figure*}

In the process of traversal, it is possible that the threads which are helping might have been inserted the $v(\vkey)$ but not publish \success. Also, it is possible that the $v(\vkey)$ we are trying to insert was already inserted and then removed by some other thread and then a different \vnode with \vkey was inserted to the \vlist. We properly handled these cases.

To identify these cases, we check the \vnodes that was discovered during the process of traversal. If that is the same as $v(\vkey)$ that we are trying to
insert, then we reported \success to the \state array(see \lineref{helpaddv-succ-cas1}). Also, we check if that \vnode is marked by invoking \isMarked procedure for deletion, means the $v(\vkey)$ already inserted and then marked for deletion, then we also reported \success to the \state array (see \lineref{helpaddv-succ-cas2}), else we try to report \failure.
Like $\wfaddv$, a $\wfremv(\vkey)$ operation invoked by passing the  $\vkey$ to be deleted, in Line \ref{wfremvstart} to \ref{wfremvend}. It starts by choosing a \phase number, announcing its operation on the \state array to delete the $v(\vkey)$. Then the thread calls $\hgds(\phase)$ to invoke the helping mechanism. Like a $\wfaddv$ operation, it traverses the \state array and helps all pending operations and tries to complete its operation. In the next step the same thread(or a helping thread) enters $\hremv(\phase)$ and verifies the \phase number and type of operation \optype, if they match with \tremv then it invokes $\hlcv(\phase)$ to traverse the \vlist until it finds a vertex with its key greater than or equal to $v(\vkey)$. In the process of traversal, it physically deletes all logically deleted \vnodes, using a \CAS. Once it reaches the appropriate location checks whether the $\vkey$ is already present. If the $\ekey$ is present it attempts to remove the \vnode in two steps (like \cite{Harris:NBList:disc:2001}), (a) atomically marks the \vnext of current \vnode, using a \CAS(\lineref{helpremv-succ-logical-cas}), and (b) atomically updates the \vnext of the predecessor \vnode to point to the \vnext of current \vnode, using a \CAS(\lineref{helpremv-succ-phy-cas}). On any unsuccessful \CAS it will cause the operation to restart from the \hremv procedure. After the operation completes $\hremv$ will update \success to the \state array. 
A $\wfconv(\vkey$) operation is much simpler than \wfaddv and \wfremv. We have two cases based on with and without help. For the helping case a $\wfconv(\vkey$) operation, in Line \ref{wfconvstart} to \ref{wfconvend}, first starts publishing the operation in the \state array like other operations. Then any helping thread will search it in the \vlist, If the searching key is present and not been marked it reported \success, else it reported \failure, using a \CAS to the \state array. The \hconv procedure, in Line \ref{hconvstart} to \ref{hconvend}, guarantees that the list traversal does not affected from infinite insertion of \vnodes, this is because other threads will first help this operation before inserting a new \vnode.
\begin{figure*}[]
	\begin{subfigure}[t]{.5\textwidth}
	
\begin{algorithmic}[1]
    \renewcommand{\algorithmicprocedure}{\textbf{Operation}}
	\algrestore{hadde}
	\scriptsize
	\Procedure{ $\hreme$($tid,phase$)}{}\label{hremestart}
		\While{($\tru$)} \label{lin:search-again-hreme}
		\State{$\ODA$ $op\gets \stat[tid]$;}
		\If{($\neg(op.\type = \treme \wedge op.\phase = phase)$)}
		\State{$return$;}
        \EndIf\If {$(\isMarked(op.\vsrc)$ $\vee$ $\isMarked(op.$  $\vdest))$}  \label{lin:hreme-loce8:16}
	    \State{$\ODA$ $fail$ $\gets new$ $\ODA$ $(phase,$ $\failure);$}
         \State{$\CAS(\stat[tid], op, fail)$;} 
        \State{$return$;}
        \EndIf
        \State{$enode$ $e_2\gets op.\enod$;}
        \State{$\langle pred,curr\rangle$ $\gets \nwloce(op.\vsrc,e_2.$ $\ekey)$;}
        \State{$cnext \gets curr.\enext$;}
        \If{($curr.\ekey \neq e_2.\ekey $)}
	    \State{$\ODA$ $fail$ $\gets new$ $\ODA$ $(phase,$ $\failure);$}
        \If{($\CAS(\stat[tid], op, fail)$)}
        \State{$return$;}
        \EndIf
        \Else
        \If{$(\neg \CAS(curr.\enext, cnext, \MarkedRef(c$ $next))) $}\label{lin:cas-wfreme-lgic}
       	 \State{$goto$ \lineref{search-again-hreme};}
		\EndIf
		\If{$(\neg \CAS(pred.\enext, curr,cnext)))$}\label{lin:cas-wfreme-phy}
       	 \State{$goto$ \lineref{search-again-hreme};}
		\EndIf
	    \State{$\ODA$ $succ$ $\gets new$ $\ODA$ $(phase,$ $\success);$}
        \If{($\CAS(\stat[tid], op, succ)$)}
        \State{$return$;}
        \EndIf
        \EndIf
        \EndWhile
	\EndProcedure\label{hremeend}
	\algstore{hreme}
    \end{algorithmic}
    \hrule
    \end{subfigure}
\begin{subfigure}[t]{.5\textwidth}
\begin{algorithmic}[1]
    \renewcommand{\algorithmicprocedure}{\textbf{Operation}}
	\algrestore{hreme}
	\scriptsize
	\Procedure{$\hcone$($tid,phase$)}{}\label{hconestart}
		\State{$\ODA$ $op$ $\gets \stat[tid]$;}
		\If{($\neg(op.\type = \tcone \land op.\phase = phase)$)}
		\State{$return$;}
        \EndIf
        \State{$\enode$ $e_2\gets op.\enod$;}
        \State{$\langle pred,curr\rangle$ $\gets \nwloce(op.\vsrc,e_2.\ekey$ $)$;}
        \If{($(curr.\ekey = e_2.\ekey)$ $\wedge$ $\neg \isMarked$ $(curr.\enext)$)}
	    \State{$\ODA$ $succ$ $\gets new$ $\ODA$ $(phase,$ $\success);$}
        \State{($\CAS(\stat[tid], op, succ)$);}
        \State{$return$ $\tru$;}
        \Else
	    \State{$\ODA$ $fail$ $\gets new$ $\ODA$ $(phase,$ $\failure);$}
        \State{($\CAS(\stat[tid], op, fail)$);}
        \State{$return$ $\fal$;}
        \EndIf
	\EndProcedure\label{hconeend}
	\algstore{hcone}
\end{algorithmic}
\hrule
\begin{algorithmic}[1]
    \renewcommand{\algorithmicprocedure}{\textbf{Operation}}
	\algrestore{hcone}
	\scriptsize
	\Procedure{$\wohe$($key_1,key_2$)}{}\label{wohestart}
	    \State {$\vnode$ $v_1,$ $v_2$;}
	    \State {$\enode$ $e$;}
		\State{$\langle v_1,v_2,flag \rangle \gets LocateUV (key_1, key_2$);}
        \If {($flag$ = $\fal$)}
        \State {$return$ \vntp; }\hspace{.4cm}
        \EndIf
        \If {$(\isMarked(v_1)$ $\vee$ $\isMarked(v_2))$}  \label{lin:wohe-loce8:16}
        \State {$return$ \vntp; }
        \EndIf
		\State {$e  \gets \eh$;}
		\While{$((e.\ekey < key_2))$}
		\State {$e \gets \unMarkedRef(e.\enext)$ ;}
		\EndWhile
		\If{($(e.\ekey = key_2)$ $\land$ $(\neg\isMarked(e.$ $\enext)$ $\land$
		    $(\neg\isMarked(v_1.\vnext)$ $\land$
		    $(\neg\isMarked($ $v_2.\vnext))$)}
		 \State{ return \ep;}
		 \Else
		 \State{  return \ventp;}
		 \EndIf
	\EndProcedure\label{woheend}
	\algstore{wohe}
\end{algorithmic}
\hrule
	\end{subfigure}
	\setlength{\belowcaptionskip}{-15pt}
	\caption{Pseudo-codes of \wfaddv, \wfremv, $\wfconv$ and \wfcone}\label{fig:wf-methods4}
\end{figure*}

Unlike a \haddv and \hremv, a \hconv  procedure does not need a loop to update the \state array if any failure of the \CAS. For the without helping case a $\wohv(\vkey$) operation, in Line \ref{wohvstart} to \ref{wohvend}, first traverses the \vlist in a \wf manner skipping all logically marked \vnodes until it finds a vertex with its key greater than or equal to $\vkey$. Once it reaches the appropriate \vnode, checks its key value equals to $\vkey$, and it is unmarked, then it returns \tru otherwise returns \fal. We do not allow $\wohv$ for any helping in the process of traversal. This is because if the vertex set keys are finite(upto available memory in the system) to achieve higher throughput and $\wohv$ is called more frequently than the $\wfaddv$ and $\wfremv$ operations, so it does not allow any help.

\subsection{The Edge Methods}
\label{sec:working-con-graph-methods:remove}

The \wf graph edge operations $\wfadde$, $\wfreme$, and $\wfcone$ are depicted in \figref{wf-methods1} and \ref{fig:wf-methods2} and their corresponding helping procedures $\hadde$, $\hreme$, and $\hcone$ are defined in \figref{wf-methods3}. 
\begin{figure*}[]
	\begin{subfigure}[t]{.5\textwidth}
	
\begin{algorithmic}[1]
    \renewcommand{\algorithmicprocedure}{\textbf{Operation}}
	\algrestore{wohe}
	\scriptsize
	\Procedure{$\wlocv(key$)}{}\label{wlocvstart}
		\While{($\tru$)} \label{lin:search_again-vertex}
		\State {$v_1 \gets \vh$; $v_2 \gets v_1.\vnext$;} \label{lin:locv3}
	    \While{($\tru$)}
	    \State {$v_3 \gets v_2.\vnext$;}
	    \While{$(\isMarked(v_3))$}
	    \State{$flag \gets \CAS(v_1.\vnext, v_2, v_3)$;}
		\If{($\neg flag$)} 
		\State $goto$ \lineref{search_again-vertex};
		\EndIf
		\State {$v_2 \gets v_3;$ $v_3 \gets v_2.\vnext$;}
	    \EndWhile
	    \If{$(v_2.\vkey >= key)$}
		\State {$return$ $\langle v_1,v_2 \rangle$}; 
		\EndIf
	    \State {$v_1 \gets v_2;$ $v_2 \gets v_3$;}
		\EndWhile
		\EndWhile
	\EndProcedure\label{wlocvend}
	\algstore{hreme}
    \end{algorithmic}
    \hrule
\begin{algorithmic}[1]
    \renewcommand{\algorithmicprocedure}{\textbf{Operation}}
	\algrestore{hreme}
	\scriptsize
	\Procedure{$\nwloce(srcv,key$)}{}\label{nwocestart}
        \While{($\tru$)} \label{lin:loce-search_again}
		\State {$e_1 \gets srcv.\enext;$ $ e_2 \gets e_1.\enext$;}
        \While{($\tru$)} \label{lin:loce-search_again_2}
	    \State {$e_3 \gets e_2.\enext;$ $ v \gets e_2.\pointv$;}
	    \While{$(\isMarked(v)$ $\land$ $\neg\isMarked(e_3))$}
	    \If{$(\neg\CAS(e_2.\enext,e_3,\MarkedRef(e_3)))$}
		\State {$goto$ Line \ref{lin:loce-search_again}}; 
		\EndIf
	    \If{$(\neg\CAS(e_1.\enext,e_2,e_3))$}
		\State {$goto$ Line \ref{lin:loce-search_again}}; 
		\EndIf
		\State{$e_2 \gets \unMarkedRef(e_3)$;}
		\State{$e_3 \gets e_2.\enext;$ $v \gets e_2.\pointv$;}
		\EndWhile
	    \While{$(\isMarked(e_3))$}
	    \If{$(\neg\CAS(e_1.\enext,e_2,e_3))$}
		\State {$goto$ Line \ref{lin:loce-search_again}}; 
		\EndIf
	    \State{$e_2 \gets \unMarkedRef(e_3)$;}
		\State{$e_3 \gets e_2.\enext;$ $v \gets e_2.\pointv$}
		\EndWhile
		\If{$(\isMarked(v))$}
		\State {$goto$ Line \ref{lin:loce-search_again_2}};
	    \EndIf
	    \If{$(e_2.\ekey >= key)$}
	    \State{$return \langle e_1, e_2 \rangle;$}
	    \EndIf
	    \State{$e_1 \gets e_2;$ $e_2 \gets e_3$;}
	    \EndWhile
		\EndWhile
	\EndProcedure\label{nwoceend}
	\algstore{nwoce}
\end{algorithmic}
\hrule

    \end{subfigure}
\begin{subfigure}[t]{.5\textwidth}
\begin{algorithmic}[1]
    \renewcommand{\algorithmicprocedure}{\textbf{Operation}}
	\algrestore{nwoce}
	\scriptsize
	\Procedure{$\locuv(key_1,key_2$)}{}\label{locuvstart}
        \If{$(key_1 < key_2)$}
        \State {$v_1 \gets \vh$;}\hspace{.5cm}
        \While{$(v_1.\vkey < key_1)$} 
		\State {$v_1 \gets v_1.\vnext$;} 
		\EndWhile
        \If{($v_1.\vkey$ $\neq$ $key_1$ $\vee$ $\isMarked(v_1)$)} \label{lin:helploce8}\hspace{.5cm}         \State {$return$ $\langle v_1,v_2,\fal \rangle$;}
        \EndIf
        \State {$v_2 \gets v_1.\vnext$;} 
		\While{$(v_2.\vkey < key_2)$} 
		\State {$v_2 \gets v_2.\vnext$;} 
		\EndWhile
        \If{($v_2.\vkey$ $\neq$ $key_2$ $\vee$ $\isMarked(v_2)$)}\label{lin:helploce16}\hspace{.5cm}
        \State {$return$ $\langle v_1,v_2,\fal \rangle$;}
        \EndIf
        \Else 
        \State {$v_2 \gets \vh$;}\hspace{.5cm}
		\While{$(v_2.\vkey < key_2)$} 
		\State {$v_2 \gets v_2.\vnext$;} 
		\EndWhile
        \If{($v_2.\vkey$ $\neq$ $key_2$ $\vee$ ($\isMarked(v_2)$))}\label{lin:helploce16.1}\hspace{.5cm}
        \State {$return$ $\langle v_1,v_2,\fal \rangle$;}
        \EndIf
        \State {$v_1 \gets n_v.\vnext$;}
        \While{$(v_1.\vkey < key_1)$} 
		\State {$v_1 \gets v_1.\vnext$;} 
		\EndWhile
        \If{($v_1.\vkey$ $\neq$ $key_1$ $\vee$ ($\isMarked(v_1)$))}\label{lin:helploce8.1}\hspace{.5cm}         
        \State {$return$ $\langle v_1,v_2,\fal \rangle$;}
        \Else
        \State {$return$ $\langle v_1,v_2,\tru \rangle$;}
        \EndIf
        \EndIf

	\EndProcedure\label{locuvend}
	\algstore{hcone}
\end{algorithmic}
\hrule
	\end{subfigure}
	\setlength{\belowcaptionskip}{-15pt}
	\caption{Pseudo-codes of \wfaddv, \wfremv, $\wfconv$ and \wfcone}\label{fig:wf-methods5}
\end{figure*}

A $\wfadde(\ekey_1, \ekey_2)$ operation, in Line \ref{wfaddestart} to \ref{wfaddeend}, begins by validating the presence of the $v(\ekey_1)$ and $v(\ekey_2)$ in the \vlist by invoking \locuv and are unmarked. If the validations fail, it returns \vntp. Once the validation succeeds, \wfadde operation starts by choosing a \phase number, creating a new \enode by invoking $\createe(\ekey)$, posting its operation on the \state array alone with the \vsrc and \vdest vertices. Then the thread calls $\hgds(\phase)$ to invoke the helping mechanism. After that it traverses the \state array and helps all pending operations and tries to complete its own operation. In the next step the same thread(or a helping thread) enters $\hadde(\phase)$(in Line \ref{haddestart} to \ref{haddeend}) and verifies the \phase number and type of operation \optype, if they match with the \tadde then it invokes $\nwloce(v(\ekey_1),\ekey_2)$(\lineref{hadde-newloce}) to traverse the \elist until it finds an \enode with its key greater than or equal to $\ekey_2$. In the process of traversal, it physically deletes two kinds of logically deleted \enodes,
(a) the \enodes whose \vnode is logically deleted, using a \CAS, and (b) the logically deleted \enodes, using a \CAS. Once it reaches the appropriate location checks if the $\ekey_2$ is already present or not. If present, then it attempts a \CAS to add the new \enode(\lineref{helpadde-succ-wfadde-cas}). On an unsuccessful \CAS, it retries. After the operation completes $\hadde(\phase)$ \success or \failure reported to the \state array. 

In the process of traversal in the \hadde procedure, it is possible that the threads which are helping might have been inserted the $e(\ekey_2)$ but not publish \success. Also it is possible that the $e(\vkey_2)$ we are trying to insert was already inserted and then removed by some other thread and then a different \enode with $\ekey_2$ was inserted to the \elist of $v(\ekey_1)$ . We properly handled these cases. Like \wfaddv and \wfremv, we check the \enodes that was discovered during the process of traversal is the same key $\ekey_2$ that we are trying to insert, then we reported \success to the \state array(see \lineref{helpadde-succ-cas1}). Also we check if that \enode is marked( by invoking \isMarked), means the $e(\ekey_2)$ already inserted and then marked for deletion, then we also reported \success to \state array(see \lineref{helpadde-succ-cas2}), else we try to report \failure.

In each start of the  \textbf{while}(\lineref{search-again-hadde}) it is necessary to check for the presence of vertices $\operation.\vsrc$ and $\operation.\vdest$ in \lineref{hadde-loceuv-marked} by calling \isMarked procedure. The reason explained in  \cite{Chatterjee+:NbGraph:ICDCN-19}. This is one of the several differences between an implementation trivially extending lock-free and wait-free list and concurrent \wf graph. In fact, it can be seen that if this check is not performed, then it can result in the algorithm to not be linearizable. 

\ignore{
\begin{figure}[H]
	\centerline{\scalebox{0.65}{\input{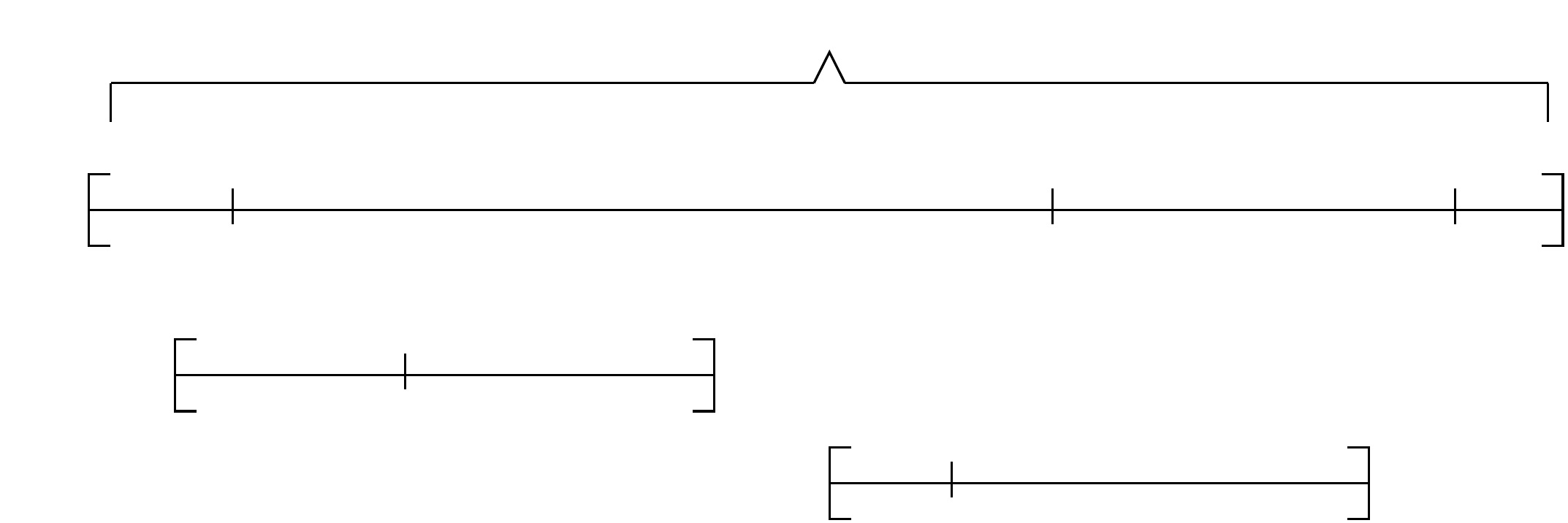_t}}}
	\caption{This figure depicts why we need an additional check to locate vertices in $\wfadde(u,v)$. A thread $T_1$ trying to perform $\wfadde(u, v, \success)$, first invokes \hlcv. Just after $T_1$ has verified vertex $u$, thread $T_2$ deletes vertex $u$. Also vertex $v$ gets added by thread $T_3$ just before $T_1$ verifies it. So, now thread $T_1$ has successfully tested for the presence of vertices $u$ and $v$ in the vertex list, and then it proceeds to add edge $(u,v)$, returning \success. However, as is evident, no possible sequentially generated history of the given concurrent execution is correct. Hence an additional check must be performed before proceeding to actually add the edge.}
	\label{fig:noseqhist}
\end{figure}
}
A $\wfreme(\ekey_1, \ekey_2)$ operation proceeds almost identical to the $\wfadde$, in Line \ref{wfremestart} to \ref{wfremeend}, begins by validating the presence of the $v(\ekey_1)$ and $v(\ekey_2)$ and are unmarked. If the validations fail, it returns \vntp. Once the validation succeeds, \wfreme operation starts by choosing a \phase number, creating a new \enode, posting its operation on the \state array alone with the \vsrc and \vdest vertices. Then it gone through helping mechanism $\hgds(\phase)$. After that it traverses the \state array and helps all pending operations and tries to complete its own operation. In the next step the same thread(or a helping thread) enters $\hreme(\phase)$(in Line \ref{hremestart} to \ref{hremeend}) and verifies the \phase number and type of operation \optype, if they match with the \treme then it traverse the \elist until it finds an \enode with its key greater than or equal to $\ekey_2$. Like \hadde, in the process of traversal, it physically deletes two kinds of logically deleted \enodes, (a) the \enodes whose \vnode is logically deleted, and (b) the logically deleted \enodes. Once it reaches the appropriate location checks if the $\ekey_2$ is already present or not, if it is, attempts to remove the $e(\ekey_2)$ in two steps like \wfremv operation, (a) atomically marks the \enext of the current \enode, using a \CAS(\lineref{cas-wfreme-lgic}), and (b) atomically updates the \enext of the predecessor \enode to point to the \enext of current \enode, using a \CAS(\lineref{cas-wfreme-phy}). On any unsuccessful \CAS, it reattempted. Like \hadde, after the operation completes $\hreme(\phase)$ \success or \failure reported to the \state array. 

Like \wfconv, a \wfcone operations also has two cases based on with and without help. For the helping case a \wfcone($\ekey_1, \ekey_2$) operation, in Line \ref{wfconestart} to \ref{wfconeend}, does similar work like \wfadde and \wfreme operation. It publishes the operation in the \state array and then invokes the $\hgds(\phase)$. If the current \enode is equal to the $\ekey_2$ and $e(\ekey_2)$ unmarked and $v(\ekey_2)$ and $v(\ekey_2)$ also unmarked, it updates the \state array to \success, using a \CAS. Otherwise \failure is updated to the \state array.

For the without helping case a $\wohe(\ekey_1, \ekey_2$) operation, in Line \ref{wohestart} to \ref{woheend}, validates the presence of the corresponding \vnodes. Then it traverses the \elist of $v(\ekey_1)$ in a \wf manner skipping all logically marked \enodes until it finds an edge with its key greater than or equal to $\ekey_2$. Once it reaches the appropriate \enode, checks its key value equals to $\ekey_2$, and $e(\ekey_2)$ is unmarked, and $v(\ekey_1)$ and $v(\ekey_2)$ are unmarked, then it returns \ep otherwise it returns \ventp. 

\ignore{
The $\enode$ class has three fields. The $val$ field is the key value of the $edge(u,v)$ (edge from $u$ to $v$), stores the key value of $v$. The edge nodes are sorted in order of the $val$ field. This helps efficiently detect when a \emph{\enode} is absent in the edge list. The \emph{\enext} pointer in each node can be marked using a special \emph{marked bit}, to signify that the entry in the node is logically deleted.

The $marked$ field is of type boolean which indicates whether that \emph{\enode} is in the edge list or not. The $enext$ field is a reference to the next \emph{\enode} in the edge list. 

Similarly, the $\vnode$ class has five fields. The $val$ field is the key value of the vertex $u$. The vertex nodes are sorted in the order of the $val$ field which helps detect presence/absence of a \emph{\vnode} in the vertex list (like the the sorted \enode list). The $marked$ field is a boolean marked field indicating whether that \emph{\vnode} is in the vertex list or not. The $vnext$ field is a reference to the next \emph{\vnode} in the vertex list. The $\eh$ field is a sentinel \emph{\enode} at the start of the each edge list for each \emph{\vnode} has the smallest possible key value ($-\infty$). The $\et$ field is sentinel \emph{\enode} at the end of the each edge list has the largest possible key value ($+\infty$). 

We assume the $enext$ and $marked$ are treated as a single atomic unit: any attempt to update the enext field when the marked field is true will fail. Similarly, the $vnext$ and $marked$ fields of a $\vnode$ are treated as a single atomic unit.   

Our \wfree concurrent graph \datastruct supports six major operations: \wfaddv, \wfremv, \wfconv, \wfadde, \wfreme and \wfcone. All of these operations are helped by their unique supporting methods which in turn make them to run in \wfree manner.
The \nblist algorithm uses the same abstraction map as the \lylist algorithm: a key is in the set if, and only if it is in an unmarked reachable node from the \vh. 


As stated earlier our graph \datastruct uses helping mechanism for all six methods to achieve the wait-freedom. So, before starting to execute a method, a thread starts invoking a special state array we called it as an \emph{Operation Descriptor Array}(ODA) same as Timnat et. al. \cite{Timnat:WFLis:opodis:2012}.
This ODA is shared  among all the threads and can view the details of the method it is executing. Once a method is published, all other threads can try to help it for execution. When the method completed it's execution, the result is reported to the ODA, by doing a CAS, which substitutes the old existing operation descriptor with the new one.

We also maintain an ODA for each thread. The ODA entry for each thread describes its current state. 
And it's class is defined in the Table \ref{tab:struct:vertex-edge}. The $ODA$ class has seven fields, a phase field phase(phase number of the operation), the $OpType$ field signifying which operation is currently being executed by this thread, a pointer to a vnode denoted $vn$, pointer to two enode denoted $en1$ and $en2$, which serve the insert and delete operations, $EWindow$ for searching result of enode denoted $ESearchResult$ a pair of pointers (prev, curr) for
recording the result of a search operation for any enode and $VWindow$ for searching result of vnode denoted $VSearchResult$ a pair of pointers (prev, curr) for recording the result of a search operation for any vnode. We also maintain an array with type $ODA$ named $state$.  
}   

\section{An Optimized Fast Wait-free Graph Algorithm}
\label{sec:fpsp-algo}

In this section, we present the optimized version of our \wf concurrent graph \ds, which is designed based on the \fpsp algorithm by Kogan et al. \cite{Kogan+:fpsp:ppopp:2012}. The \fpsp algorithm is a combination of two parts, the first part is a \lf algorithm which is usually fast and the second one is a \wf algorithm which is slow. Pragmatically the lock-free algorithms are fast as compare to the wait-free algorithms as they don't require helping always. So, to enhance the performance and achieve a fast \wf graph we adopted \lf graph by Chatterjee et al. \cite{Chatterjee+:NbGraph:ICDCN-19} and \wf graph which described in previous \secref{wf-algo}. 
The basic working principle of the optimized \wf graph is as follows: (1). Before an operation begins the fast-path \lf algorithm, it inspects whether help needed for any other operations in the slow-path \wf algorithm, (2). Then the operation starts running with its fast-path \lf algorithms while it keeps tracking the number times it fails, which is nothing but the number of failed \CAS. Generally, if very less number of \CAS failed happen then helping is not necessary and hence the execution finishes just after completing the fast-path \lf algorithm. (3). If the operation is unable to finish its execution after trying a certain number of \CAS, then it allows entering the slow-path algorithm to finish the execution.

The slow-path \wf algorithms described in \secref{wf-algo}. Each operation chooses its \phase numbers then it publish the operation in the \state array by updating its corresponding entry. Then it traverse through the \state array and try to help the operations whose \phase number is lower than or equals to its own \phase number, which ensures the unfinished operations gets help from other threads to finish the execution. This ensures wait-freedom. The maximum number of tries in the fast-path is upper bounded by a macro \maxfail similar to \cite{Kogan+:fpsp:ppopp:2012}, we choose \maxfail to $20$. On average, we achieve high throughput with this upper bound value. The optimized \wf algorithm is given in \figref{wf-fpsp-methods1}.


\section{Proof of Correctness and Progress Guarantee}\label{sec:proof}
\ignore{
\noindent In this section, we present a proof sketch of the correctness of our concurrent \wf graph \ds. So, to prove a concurrent \ds to be correct, \lbty is a standard correctness criterion in the concurrent world. The history is defined as a collection of set of invocation and response events. Each invocation of a method call has a subsequent response. Herlihy and Wing \cite{Herlihy+:lbty:TPLS:1990} defines a history to be linearizable if,
\begin{enumerate}
\item The invocation and response events can be reordered to get a valid sequential history.
\item The generated history satisfies the object's  sequential specification.
\item If a response event precedes an invocation event in the original history, then this should be preserved in the sequential reordering.  
\end{enumerate}

A concurrent object is linearizable iff each of their histories is linearizable. Linearizability ensures that every concurrent execution can be proven by sequential execution of that object and it helps to determine the order of \lp in the concurrent execution.\\


}
In this section, we prove the correctness of our concurrent \wf graph \ds based on  $\lp$\cite{Herlihy+:lbty:TPLS:1990} events inside the execution interval of each of the operations.
 
\begin{theorem}\normalfont The concurrent \wf graph operations are linearizable.
\end{theorem}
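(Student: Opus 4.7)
The plan is to define an explicit linearization point (LP) for every invocation of each of the six operations and then argue that the total order induced by these LPs is consistent with both the real-time order of invocations/responses and the sequential specification (ADT) given in Section 3.1. Because the algorithm uses operation descriptors and helping, the LP for a given invocation may be executed by a helping thread; what matters is only that some atomic step inside the concrete execution interval is designated as the LP. I would begin by handling the vertex methods, since their LPs lie inside their own code path: for a successful \wfaddv, the LP is the successful \CAS at \lineref{helpaddv-succ-phyaddv-cas} that swings $pred.\vnext$ to the new \vnode; for a successful \wfremv, the LP is the logical-marking \CAS at \lineref{helpremv-succ-logical-cas}; for \wfconv, the LP is the atomic read of the matching (or non-matching) \vnode during the traversal in \hconv/\wohv. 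Unsuccessful cases have symmetric LPs placed at the read that witnesses the failure. Standard list-based arguments (in the style of Harris/Timnat) then show that these choices preserve the sorted-and-unmarked invariant of the \vlist and produce values matching the ADT.

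Next I would turn to the edge methods, which are the substantive case because, as the authors themselves note, their LPs often lie \emph{outside} the method and depend on concurrently running vertex operations. For a successful \wfadde, the natural candidate LP is the successful \CAS at \lineref{helpadde-succ-wfadde-cas} that splices the new \enode into the \elist; but this is valid only if both endpoint \vnodes are still unmarked at that instant. I would show, using the explicit guard at \lineref{hadde-loceuv-marked}, that when the splicing \CAS succeeds, neither $op.\vsrc$ nor $op.\vdest$ can have been logically deleted strictly before the \CAS succeeds (otherwise the helper would have taken the \failure branch earlier in the same iteration, contradicting the success). If a concurrent \wfremv marks an endpoint between the isMarked check and the splicing \CAS, I would order that \wfremv's LP \emph{after} the \wfadde's LP in the linearization; the subsequent \remie traversal (via the \pointv back-pointer and the helper-side physical cleanup in \nwloce) guarantees the edge is removed before the \wfremv returns, preserving legality. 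Symmetric reasoning applies to \wfreme (LP at the marking \CAS of \lineref{cas-wfreme-lgic}) and to \wfcone/\wohe, where the LP is the composite read that simultaneously witnesses the unmarked states of both endpoints and of the \enode (\lineref{wohe-loce8:16} and the final conjunction).

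Having fixed these LPs, I would then verify the three standard linearizability conditions. (i) Each LP lies strictly between the invocation and response of its operation; for vertex ops this is immediate, and for edge ops I would invoke the helping discipline: if the original thread completes before its LP is executed, then a helping thread must have already executed the LP inside the original execution interval (otherwise the \CAS on \state[tid] installing \success/\failure could not have succeeded). (ii) Real-time precedence is respected because every LP is a program step that physically occurs inside the original interval. (iii) The resulting sequential history is legal with respect to the ADT: I would argue by induction on the LP order that the abstract state $(V,E)$ maintained implicitly by unmarked-reachable \vnodes and \enodes evolves exactly as the ADT prescribes, using the sorted-list invariants from Section 3 and the cleanup performed by \nwloce for edges whose endpoint \vnode is marked.

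The main obstacle, as expected, will be pinning down the LPs of the edge operations in the presence of concurrent vertex removals and of multiple helpers racing on the same descriptor. In particular, I need to ensure that only one helper's \CAS at \lineref{helpadde-succ-wfadde-cas} or \lineref{cas-wfreme-lgic} is counted as the LP (since multiple helpers may retry the loop) and that the post-conditions installed into \state[tid] via the descriptor \CAS{}es at lines \ref{lin:helpadde-succ-cas1}--\ref{lin:helpadde-succ-cas3} and \ref{lin:helpadde-fail-cas1} agree with the actual state of the graph at the designated LP. The argument will rely heavily on the invariant that once $\stat[tid]$ is atomically switched from an asking type to \success or \failure, no subsequent helper can alter the effective outcome; once this invariant is established (by a case analysis on which conditional branch of \hadde/\hreme/\hcone fires first), the linearization order defined above is well-defined and legal, completing the proof.
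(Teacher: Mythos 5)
Your overall strategy (assigning explicit \lp{s} per return value, allowing helpers to execute them, then checking real-time order and legality) matches the paper, and your vertex-operation \lp{s} coincide with its choices. The genuine gap is in the case you yourself flag as the crux: a successful \wfadde (or \wfreme) concurrent with a \wfremv of an endpoint. Your claim that ``when the splicing \CAS at \lineref{helpadde-succ-wfadde-cas} succeeds, neither $op.\vsrc$ nor $op.\vdest$ can have been logically deleted strictly before it'' is false. The guard at \lineref{hadde-loceuv-marked} is evaluated only at the top of the iteration; a concurrent \wfremv can execute its marking \CAS (\lineref{helpremv-succ-logical-cas}) after that check and before the splice, and the splice still succeeds because it operates on an \enext pointer and is oblivious to the vertex mark. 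This is precisely the scenario that forces edge \lp{s} to depend on concurrent vertex operations, so you cannot keep the splicing \CAS as the \lp in this case.

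Your proposed repair --- keep both \lp{s} at their \CAS steps but ``order the \wfremv's \lp after the \wfadde's \lp'' --- does not work within your own framework and is the opposite of the paper's fix. If the remove's marking \CAS physically precedes the splice, the order of the designated atomic events already puts the remove first; to swap them you must move one \lp, and moving the remove's \lp later is unsound: any operation that observed the mark in between (e.g.\ a \wfconv or \wohv on that key returning \fal, or another edge operation returning \vntp, whose entire interval lies between the marking \CAS and the splice) would then be linearized before the vertex removal while the vertex is still abstractly present, or would need its \lp pushed outside its execution interval. The paper resolves this by moving the \emph{edge} operation's \lp earlier, to just before the first concurrent remove's \lp --- this stays inside the edge operation's interval (its post-invocation validation via \locuv saw both vertices unmarked, so its invocation precedes the marking) and leaves every third-party observation consistent. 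A secondary inaccuracy: you lean on a \remie-style eager cleanup of incident edges before \wfremv returns, but the presented \hremv performs no such traversal; incident edges disappear from the abstraction only via the endpoint-mark condition and are physically purged lazily in \nwloce. Fixing the concurrent-remove case to shift the edge \lp{s} (not the remove's) is required before your argument goes through.
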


\begin{proof}
Based on the return values of the operations we discuss the \lp{s}.
\begin{enumerate}
	\item $\wfaddv(\vkey)$: We have two cases:
	\begin{enumerate}
		\item \tru: The \lp be the successful \CAS execution at the \lineref{helpaddv-succ-phyaddv-cas}.
		\item \fal: The \lp be the atomic read of the \vnext pointing to the vertex $v(\vkey)$. 
	\end{enumerate}
	\item $\wfremv(\vkey)$: We have two cases:
	\begin{enumerate}
		\item \tru: The \lp be the successful \CAS execution at the \lineref{helpremv-succ-logical-cas} (logical removal).
		\item \fal: If there is a concurrent \wfremv operation $op$, that removed $v(\vkey)$ then the \lp be just after the \lp of $op$. If $v(\vkey)$ did not exist in the \vlist then the \lp be at the invocation of $\wfremv$.\label{step:wfremv-fal} 
	\end{enumerate}
	\item $\wfconv(\vkey)$: We have two cases:
	\begin{enumerate}
		\item \tru: The \lp be the atomic read of the \vnext pointing to the vertex $v(\vkey)$.
		\item \fal: The \lp be the same as returning \fal \wfremv, the case \ref{step:wfremv-fal}.
	\end{enumerate}

	\item $\wfadde(k_1, k_2)$: We have three cases:
	\begin{enumerate}
		\item \eadd: \label{step:aceadd}
		\begin{enumerate}
			\item With no concurrent $\wfremv(k_1)$ or $\wfremv(k_2)$: The \lp be the successful \CAS execution at the \lineref{helpadde-succ-wfadde-cas}.
			\item With concurrent $\wfremv(k_1)$ or $\wfremv(k_2)$: The \lp be just before the first remove's \lp. 
		\end{enumerate}
		\item \eap: 
		\begin{enumerate}
			\item With no concurrent $\wfremv(k_1)$ or $\wfremv(k_2)$: The \lp be the atomic read of the \enext pointing to the \enode $e(k_2)$ in the \elist fo the vertex $v(k_1)$. 
			\item With concurrent $\wfremv(k_1)$ or $\wfremv(k_2)$ or $\wfreme(k_1, k_2)$ : The \lp be just before the first remove's \lp. 
		\end{enumerate}		
		\item \vntp: \label{step:wfadde-vntp}
		\begin{enumerate}
			\item At the time of invocation of $\wfadde(k_1, k_2)$ if both vertices $v(k_1)$ and $v(k_2)$ were in the \vlist and a concurrent \wfremv removes $v(k_1)$ or $v(k_2)$ or both then the \lp be the just after the \lp of the earlier \wfremv.
			\item At the time of invocation of $\wfadde(k_1, k_2)$ if both vertices $v(k_1)$ and $v(k_2)$ were not present in the \vlist, then the \lp be the invocation point itself.
		\end{enumerate}
	\end{enumerate}	
					
	\item $\wfreme(k_1, k_2)$:  We have three cases:
	\begin{enumerate}
		\item \er: 
		\begin{enumerate}
			\item With no concurrent $\wfremv(k_1)$ or $\wfremv(k_2)$: The \lp be the successful \CAS execution at the \lineref{cas-wfreme-lgic}(logical removal).
			\item With concurrent $\wfremv(k_1)$ or $\wfremv(k_2)$: The \lp be just before the first remove's \lp. 
		\end{enumerate}
		\item \entp: If there is a concurrent \wfreme operation removed $e(k_1,k_2)$ then the \lp be the just after its \lp, otherwise at the invocation of $\wfreme(k_1, k_2)$ itself. 	
		\item \vntp: The \lp be the same as the case \wfadde returning ``\vntp''\ref{step:wfadde-vntp}.	
	\end{enumerate}			
	
	\item $\wfcone(k_1, k_2)$: Similar to \wfreme, we have three cases:
	\begin{enumerate}
		\item \ep:
		\begin{enumerate}
			\item With no concurrent $\wfremv(k_1)$ or $\wfremv(k_2)$: The \lp be the atomic read of the \enext pointing to the \enode $e(k_2)$ in the \elist fo the vertex $v(k_1)$. 
			\item With concurrent $\wfremv(k_1)$ or $\wfremv(k_2)$ or $\wfreme(k_1, k_2)$ : The \lp be just before the first remove's \lp.
		\end{enumerate}
		\item \vntp: The \lp be the same as that of the \wfadde's returning ``\vntp'' case \ref{step:wfadde-vntp}.	
		\item \ventp: The \lp be the same as that of the \wfreme's returning ``\entp'' and \wfadde's returning ``\vntp'' cases.
	\end{enumerate}

\end{enumerate}
\noindent
From the above discussion one can notice that each operation's \lp{s} lies in the interval between the invocation and the return steps. For any invocation of a $\wfaddv(\vkey)$ operation the traversal terminates at the \vnode whose key is just less than or equal to $\vkey$. Similar reasoning also true for invocation of an $\wfadde(k_1, k_2)$ operation. Both the operations do the traversal in the sorted \vlist and \elist to make sure that a new \vnode or \enode does not break the invariant of the \wf graph \ds. The \wfremv and \wfreme do not break the sorted order of the lists. Similarly, the non-update operations do not modify the \ds. Thus we concluded that all \wf graph operations maintain the invariant across the \lp{s}. This completes the proof of the \lbty.
\end{proof}

\begin{theorem}\normalfont
	The presented concurrent graph operations $\wfaddv$, $\wfremv$, $\wfconv$, $\wfadde$, $\wfreme$, and $\wfcone$ are \wf.
\end{theorem}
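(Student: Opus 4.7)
The plan is to prove wait-freedom of each public operation by leveraging the phase-number discipline and the helping machinery, working from the inside out. First I would verify that the auxiliary procedure \mxph{} is trivially wait-free (a single \FAA) and that \hgds{} is wait-free because it executes a bounded \textbf{for}-loop over the fixed-size \state{} array (length equal to the number of threads $n$) dispatching to helper routines. Next I would handle the ``read-only'' helpers \hconv{} and \hcone{}: their bodies are straight-line apart from the internal list traversal, so their termination reduces to the termination of \wlocv{} and \nwloce{}.

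The second step is to bound the traversals \wlocv{} (\lineref{wlocvstart}--\ref{lin:wlocvend}) and \nwloce{} (\lineref{nwocestart}--\ref{lin:nwoceend}). Both restart on a failed \CAS{} used to physically splice out a marked node. I would argue that each restart is charged to a successful concurrent modification (a mark or a splice) by some other thread, and that in the wait-free regime the number of such interfering modifications that can occur while the current thread holds phase number $p$ is finite: every thread that starts a new operation with a larger phase is forced by \hgds{} to help $p$ before touching the list on its own behalf, and every already-pending operation with phase $\le p$ will itself be completed (by some helper) in a bounded number of steps. Hence the number of potential interferers is at most $n$, giving a finite bound on restarts.

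The third step is the outer \textbf{while}(\tru) loops in the update helpers \haddv, \hremv, \hadde, \hreme{} (e.g.\ the loop at \lineref{search-again-haddv}, \lineref{search-again-hremv}, \lineref{search-again-hadde}, \lineref{search-again-hreme}). I would show that each iteration either (i) completes the operation by installing \success{} or \failure{} into \state[tid] via \CAS, or (ii) exits because some other helping thread already did so (detected at the top-of-loop check $op.\type\!=\!\cdots \land op.\phase\!=\!phase$), or (iii) performs a failed \CAS{} that witnesses concrete progress by a competing thread. Combining this with the bound from the previous step and the observation that only finitely many competing operations can be ``ahead'' in the phase order yields a finite step bound per helper invocation. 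Finally, for each top-level operation $\wfaddv$, $\wfremv$, $\wfconv$, $\wfadde$, $\wfreme$, $\wfcone$, wait-freedom follows because its body consists of \mxph, publication to \state, a single \hgds{} call, and a constant-time return — all shown wait-free — and because by the time \hgds{} iterates past the caller's own slot the caller's operation has been helped to a terminal state \success{} or \failure.

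The main obstacle will be step two: rigorously bounding restarts of \wlocv{} and especially \nwloce{}, whose inner loops cascade through two kinds of logically deleted edges (edges with a marked endpoint vertex, and edges marked directly). The delicate point is that a marked endpoint can itself be produced by a concurrent \wfremv{}, so the interference analysis must be simultaneously over concurrent edge-helpers and vertex-helpers with phase $\le p$. I would handle this by a global amortization argument: each successful logical mark or physical splice happens at most once per node, and each helper invocation can be charged at most $O(n)$ such events from competing phases, giving an overall $O(|V|+|E|+n)$ step bound per top-level operation. With this bound in place, every operation completes in a finite number of its own steps regardless of scheduler behavior, which establishes wait-freedom.
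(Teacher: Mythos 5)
Your plan is sound and rests on the same underlying mechanism the paper invokes, but it goes considerably further than the paper's own argument, which is only a few sentences long: the paper simply asserts that because every operation draws a monotonically increasing \phase number, publishes itself in \state, and then (via \hgds) helps every pending operation with a smaller-or-equal \phase, every unfinished operation "gets help" and therefore finishes in a limited number of steps. It never analyzes the traversals \wlocv and \nwloce, the restart loops in the update helpers, or the source of \CAS failures. Your decomposition --- \mxph and \hgds are trivially bounded, read-only helpers reduce to traversal termination, traversal restarts and helper-loop iterations are charged to successful concurrent modifications, and the phase discipline bounds the set of operations that can generate such modifications while a given operation is pending --- is exactly the structure a rigorous proof needs, so in that sense you are supplying what the paper leaves implicit rather than taking a different route. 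Two points in your step two deserve tightening. First, the statement that "every already-pending operation with phase $\le p$ will itself be completed in a bounded number of steps" is close to assuming the conclusion; make it an induction on phase number, or replace it by the direct observation that a thread which publishes a new operation after $p$ is pending cannot return from \hgds until it has passed $p$'s slot, and the helper it invokes there loops until $p$'s entry is set to \success or \failure --- hence each thread can begin at most one new operation while $p$ is pending. Second, interference on the lists is not only from the owners of those operations: a new operation with a larger phase can still be advanced by \emph{other} helpers whose scan reaches its slot before $p$'s, so the honest bound is "finitely many (roughly $2n$) operations, each contributing $O(1)$ successful insert/mark \CAS{es} plus at most one physical splice per marked node," rather than "at most $n$ interferers." With those repairs your amortized bound and the final per-operation finiteness claim go through, and the result is a stronger justification of wait-freedom than the one the paper actually gives.
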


\begin{proof}
To show the concurrent graph algorithms to be wait-freedom, we have to make sure that the helping procedure terminates with a limited number try in concurrent with update operations. 
As we discussed in \secref{wf-algo} that each operation chooses its \phase numbers larger than all the previous operations and then publishes the operation in the \state array by updating its entry. After that, it traverses through the \state array and tries to help the operations whose \phase number is lower than or equal to its \phase number, which ensures the unfinished operations gets help from other thread to finish the execution. So that all the threads help the pending operations and finished the execution with a limited number of steps. This ensures wait-freedom. Therefore,  the graph operations \wfaddv, \wfremv, \wfconv, \wfadde, \wfreme, and \wfcone are \wf. 
\end{proof}
\ignore{
\begin{proof}

Whenever an insertion or a deletion operation is blocked by a concurrent delete operation by the way of a marked pointer, then that blocked operations is helped  to make a safe return. Generally, insertion and lookup operations do not need help by a concurrent operation. So, if any random concurrent execution consists of any concurrent \ds operation, then at least one operation finishes its execution in a finite number of steps taken be a non-faulty thread. Therefore,  the \wf graph operations \wfaddv, \wfremv, \wfconv, \wfadde, \wfreme, and \wfcone are \lf.

\end{proof}
}

\section{Experiments and Analysis}
\label{sec:results}
\begin{figure*}[!hbtp]
\captionsetup{font=scriptsize}
    \begin{subfigure}[b]{0.28\textwidth}
    \setlength{\belowcaptionskip}{-2mm}
    \captionsetup{font=scriptsize, justification=centering}
    \caption{Lookup Intensive }
        \centering
        \resizebox{\linewidth}{!}{
	\begin{tikzpicture} [scale=0.3]
	\begin{axis}[legend style={at={(0.5,1)},anchor=north},
	xlabel=No. of threads,
	ylabel=Throughput ops/sec, ylabel near ticks]
	\addplot table [x=Threads,  y=$Coarse$]{results/WFG10U90L.dat};
	\addplot table [x=Threads, y=$HoH$]{results/WFG10U90L.dat};
	\addplot table [x=Threads, y=$Lazy$]{results/WFG10U90L.dat};
	\addplot table [x=Threads, y=$NBGraph$]{results/WFG10U90L.dat};
	\addplot table [x=Threads, y=$WFG-woh$]{results/WFG10U90L.dat};
	\addplot table [x=Threads, y=$OWFG-woh$]{results/WFG10U90L.dat};
	\addplot table [x=Threads, y=$WFG-wh$]{results/WFG10U90L.dat};
	\addplot table [x=Threads, y=$OWFG-wh$]{results/WFG10U90L.dat};
	\addplot[blue,sharp plot,update limits=false] 
	coordinates {(-15,194103) (90,194103)} 
	node[above] at (axis cs:55,194103) {base-line(Seq)};
	\end{axis}
	\end{tikzpicture}
        }
        \label{fig:subfig9010}
    \end{subfigure}
    \begin{subfigure}[b]{0.28\textwidth}
    \setlength{\belowcaptionskip}{-2mm}
    \captionsetup{font=scriptsize, justification=centering}
    \caption{Equal Lookup and Updates}   
    \centering
        \resizebox{\linewidth}{!}{
   	\begin{tikzpicture}[scale=0.30]
	\begin{axis}[
	xlabel=No. of threads,
	ylabel=Throughput ops/sec, ylabel near ticks]
	\addplot table [x=Threads,  y=$Coarse$]{results/WFG50U50L.dat};
	\addplot table [x=Threads, y=$HoH$]{results/WFG50U50L.dat};
	\addplot table [x=Threads, y=$Lazy$]{results/WFG50U50L.dat};
	\addplot table [x=Threads, y=$NBGraph$]{results/WFG50U50L.dat};
	\addplot table [x=Threads, y=$WFG-woh$]{results/WFG50U50L.dat};
	\addplot table [x=Threads, y=$OWFG-woh$]{results/WFG50U50L.dat};
	\addplot table [x=Threads, y=$WFG-wh$]{results/WFG50U50L.dat};
	\addplot table [x=Threads, y=$OWFG-wh$]{results/WFG50U50L.dat};
	\addplot[blue,sharp plot,update limits=false] 
	coordinates {(-15,193894) (90,193894)} 
	node[above] at (axis cs:35,193894) {base-line(Seq)};
	\end{axis}
	\end{tikzpicture}
        }
        \label{fig:subfig5050}
    \end{subfigure}
    \begin{subfigure}[b]{0.39\textwidth}
    \captionsetup{font=scriptsize, justification=centering}
    \setlength{\belowcaptionskip}{-2mm}
    \caption{Update Intensive}
        \centering
        \resizebox{\linewidth}{!}{
           \begin{tikzpicture}[scale=0.30]
	\begin{axis}[legend pos=outer north east ,
	xlabel=No. of threads,
	ylabel=Throughput ops/sec, ylabel near ticks]
	\addplot table [x=Threads,  y=$Coarse$]{results/WFG90U10L.dat};
	\addlegendentry{Coarse}
	\addplot table [x=Threads, y=$HoH$]{results/WFG90U10L.dat};
	\addlegendentry{HoH}
	\addplot table [x=Threads, y=$Lazy$]{results/WFG90U10L.dat};
	\addlegendentry{Lazy}
	\addplot table [x=Threads, y=$NBGraph$]{results/WFG90U10L.dat};
	\addlegendentry{NBGraph}
	\addplot table [x=Threads, y=$WFG-woh$]{results/WFG90U10L.dat};
		\addlegendentry{WFGraph\text{-}woh}
	\addplot table [x=Threads, y=$OWFG-woh$]{results/WFG90U10L.dat};
	\addlegendentry{OWFGraph\text{-}woh}
	\addplot table [x=Threads, y=$WFG-wh$]{results/WFG90U10L.dat};
		\addlegendentry{WFGraph\text{-}wh}
	\addplot table [x=Threads, y=$OWFG-wh$]{results/WFG90U10L.dat};
	\addlegendentry{OWFGraph\text{-}wh}
	\addplot[blue,sharp plot,update limits=false] 
	coordinates {(-15,193926) (90,193926)} 
	node[above] at (axis cs:55,193926) {base-line(Seq)};
	
	\end{axis}
	\end{tikzpicture}
        }
        \label{fig:subfig1090}
    \end{subfigure}
    \vspace{-4mm}
     	\setlength{\belowcaptionskip}{-15pt}
    \caption{Experimental Results for Wait-free and Optimised Wait-free Graph.}
   
    \label{fig:cgds}
    \setlength{\belowcaptionskip}{-15pt}
\end{figure*}

\ignore{
\begin{figure}[!hbtp]
\captionsetup{font=scriptsize}
    \begin{subfigure}[b]{0.22\textwidth}
    \setlength{\belowcaptionskip}{-2mm}
    \captionsetup{font=scriptsize, justification=centering}
    \caption{Lookup Intensive }
        \centering
        \resizebox{\linewidth}{!}{
	\begin{tikzpicture} [scale=0.3]
	\begin{axis}[legend style={at={(0.5,1)},anchor=north},
	xlabel=No. of threads,
	ylabel=Throughput ops/sec, ylabel near ticks]
	\addplot table [x=Threads,  y=$Seq$]{results/WFG10U90L.dat};
	\addplot table [x=Threads,  y=$Coarse$]{results/WFG10U90L.dat};
	\addplot table [x=Threads, y=$HoH$]{results/WFG10U90L.dat};
	\addplot table [x=Threads, y=$Lazy$]{results/WFG10U90L.dat};
	\addplot table [x=Threads, y=$NBGraph$]{results/WFG10U90L.dat};
	\addplot table [x=Threads, y=$WFGraph$]{results/WFG10U90L.dat};
	\addplot table [x=Threads, y=$OWFGraph$]{results/WFG10U90L.dat};
	\end{axis}
	\end{tikzpicture}
        }
        \label{fig:subfig9010}
    \end{subfigure}
    \begin{subfigure}[b]{0.22\textwidth}
    \setlength{\belowcaptionskip}{-2mm}
    \captionsetup{font=scriptsize, justification=centering}
    \caption{Equal Lookup and Updates}   
    \centering
        \resizebox{\linewidth}{!}{
   	\begin{tikzpicture}[scale=0.30]
	\begin{axis}[
	xlabel=No. of threads,
	ylabel=Throughput ops/sec, ylabel near ticks]
	\addplot table [x=Threads,  y=$Seq$]{results/WFG50U50L.dat};
	\addplot table [x=Threads,  y=$Coarse$]{results/WFG50U50L.dat};
	\addplot table [x=Threads, y=$HoH$]{results/WFG50U50L.dat};
	\addplot table [x=Threads, y=$Lazy$]{results/WFG50U50L.dat};
	\addplot table [x=Threads, y=$NBGraph$]{results/WFG50U50L.dat};
	\addplot table [x=Threads, y=$WFGraph$]{results/WFG50U50L.dat};
	\addplot table [x=Threads, y=$OWFGraph$]{results/WFG50U50L.dat};
	\end{axis}
	\end{tikzpicture}
        }
        \label{fig:subfig5050}
    \end{subfigure}
    \caption{Results with more lookup operations than updates.}
    \label{fig:cgds}
\end{figure}

\begin{figure}[!hbtp]
    \begin{subfigure}[b]{0.22\textwidth}
    \captionsetup{font=scriptsize, justification=centering}
    \setlength{\belowcaptionskip}{-2mm}
    \caption{Update Intensive}
        \centering
        \resizebox{\linewidth}{!}{
           \begin{tikzpicture}[scale=0.30]
	\begin{axis}[legend pos=outer north east ,
	xlabel=No. of threads,
	ylabel=Throughput ops/sec, ylabel near ticks]
	\addplot table [x=Threads,  y=$Seq$]{results/WFG90U10L.dat};
	\addplot table [x=Threads,  y=$Coarse$]{results/WFG90U10L.dat};
	\addplot table [x=Threads, y=$HoH$]{results/WFG90U10L.dat};
	\addplot table [x=Threads, y=$Lazy$]{results/WFG90U10L.dat};
	\addplot table [x=Threads, y=$NBGraph$]{results/WFG90U10L.dat};
	\addplot table [x=Threads, y=$WFGraph$]{results/WFG90U10L.dat};
	\addplot table [x=Threads, y=$OWFGraph$]{results/WFG90U10L.dat};
	\end{axis}
	\end{tikzpicture}
        }
        \label{fig:subfig1090}
    \end{subfigure}
    \begin{subfigure}[b]{0.28\textwidth}
    \captionsetup{font=scriptsize, justification=centering}
    \setlength{\belowcaptionskip}{-2mm}
    \caption{Update Only}
        \centering
        \resizebox{\linewidth}{!}{
           \begin{tikzpicture}[scale=0.30]
	\begin{axis}[legend pos=outer north east ,
	xlabel=No. of threads,
	ylabel=Throughput ops/sec, ylabel near ticks]
	\addplot table [x=Threads,  y=$Seq$]{results/WFG100U.dat};
	\addlegendentry{$Seq$}
	\addplot table [x=Threads,  y=$Coarse$]{results/WFG100U.dat};
	\addlegendentry{$Coarse$}
	\addplot table [x=Threads, y=$HoH$]{results/WFG100U.dat};
	\addlegendentry{$HoH$}
	\addplot table [x=Threads, y=$Lazy$]{results/WFG100U.dat};
	\addlegendentry{$Lazy$}
	\addplot table [x=Threads, y=$NBGraph$]{results/WFG100U.dat};
	\addlegendentry{$NBGraph$}
	\addplot table [x=Threads, y=$WFGraph$]{results/WFG100U.dat};
		\addlegendentry{$WFGraph$}
	\addplot table [x=Threads, y=$OWFGraph$]{results/WFG100U.dat};
	\addlegendentry{$OWFGraph$}
	\end{axis}
	\end{tikzpicture}
        }
        \label{fig:subfig10900}
    \end{subfigure}
\vspace{-4mm}
        \setlength{\belowcaptionskip}{-2mm}
\caption{Results with more update operations than lookup.} 
\label{fig:cgds1}
\end{figure}
}

\textbf{Experimental Setup:}
We conducted our experiments on a processor with Intel(R) Xeon(R) E5-2690 v4 CPU containing 56 cores running at 2.60GHz. Each core supports 2 logical threads. Every core's L1-64K, L2-256K cache memory is private to that core; L3-35840K cache is shared across the cores, 32GB of RAM and 1TB of hard disk, running 64-bit Linux operating system. All the implementation\footnote{The source code is available on https://github.com/PDCRL/ConcurrentGraphDS.} were written in C++ (without any garbage collection) and multi-threaded implementation is based on Posix threads. 

\textbf{Running Strategy:}
In the experiments, we start with an initial graph of $1000$ vertices and nearly $\binom{1000}{2}/4 \approx 125000$ edges added randomly, rather than starting with an empty graph. When the program begins, it creates a fixed set of threads (1, 10, 20, 30, 40, 50, 60 and 70) and each thread randomly performs a set of operations chosen by a particular workload distribution, as given below. The metric for evaluation is the number of operations completed in a unit time, i.e. throughput. We measured throughput by running the experiment for 20 seconds. Each data point is obtained by averaging over 5 iterations. 

\textbf{Workload Distribution:}
To compare the performance with several micro benchmarks, we used the following distributions over the ordered set of operations $\{$\wfaddv, \wfremv, \wfconv, \wfadde, \wfreme, \wfcone$\}$: 
\begin{enumerate}
    \item \textit{Lookup Intensive}: $($$2.5\%$, $2.5\%$, $45\%$, $2.5\%$, $2.5\%$, $45\%$$)$, see the \figref{subfig9010}.
    \item \textit{Equal Lookup and Updates}: $($$12.5\%$, $12.5\%$, $25\%$, $12.5\%$, $12.5\%$, $25\%$$)$, see the \figref{subfig5050}. 
    \item \textit{Update Intensive}: $($$22.5\%$, $22.5\%$, $5\%$, $22.5\%$, $22.5\%$, $5\%$$)$, \figref{subfig1090}.
\end{enumerate}


We compare the following concurrent graph algorithms:
\begin{enumerate}
    \item \textbf{Seq}: Sequential execution of all the operations.
    \item \textbf{Coarse}: Execution with a coarse grained lock \cite[Ch. 9]{Maurice+:AMP:book:2012}.
    \item \textbf{HoH}: Execution with Hand-over-Hand lock \cite[Ch. 9]{Maurice+:AMP:book:2012}.
    \item \textbf{Lazy}: Execution with Lazy-lock \cite{Heller+:LazyList:PPL:2007}.
    \item\textbf{NBGraph}: Based on \nbk graph  \cite{Chatterjee+:NbGraph:ICDCN-19}. 
    \item \textbf{WFGraph-woh}: The \wf graph algorithm with \wohv and \wohe (without helping of contains vertex and edge operations) \secref{wf-algo}. 
    \item \textbf{WFGraph-wh}: The \wf graph algorithm with \wfconv and \wfcone(with helping of contains vertex and edge operation) \secref{wf-algo}. 
    \item \textbf{OWFGraph-woh}: The optimized version of \wf graph algorithm with \wohv and \wohe(without helping of contains vertex and edge operations) \secref{fpsp-algo}.
    \item \textbf{OWFGraph-wh}: The optimized version of \wf graph algorithm with \wfconv and \wfcone(with helping of contains vertex and edge operation) \secref{fpsp-algo}.
\end{enumerate}

In the plots shown in \figref{cgds}, we observe that the \texttt{WFGraph-woh} and \texttt{WFGraph-wh} algorithm does not scale well like \texttt{NBGraph} with the number of threads in the system, and saturate at $56$ threads(number of cores), on the other hand, the \texttt{OWFGraph-woh} variant scales well compare to others.  


\section{Conclusion and Future Directions}
\label{sec:conc}

In this paper, we presented an efficient, practical \wf algorithm to implement a \cgds, which allows threads to insert and delete the vertices/edges concurrently. We also developed an optimized version of \wf graph using the concept of \fpsp algorithm developed by Kogan et al. \cite{Kogan+:fpsp:ppopp:2012}. We extensively evaluated the C++ implementation of our algorithm and the optimized variant through several micro-benchmarks. We compared \wf graph and optimized \wf graph algorithms with sequential, coarse-lock, hand-over-hand lock, lazy lock, and \nbk concurrent graphs. The optimized \wf graph without helping of contains vertex and edge operations achieves nearly up to $9$x speedup on throughput with respect to locking counterparts and nearly $1.5$x speedup with respect to \nbk counterpart. Currently, our implementation does not have any garbage collection mechanism. In future, we plan to enhance our implementation with a garbage collection procedure similar to \cite{Michael:HPS:TPDS:2004}.



\bibliographystyle{plain}
\bibliography{biblio}

\newpage
\begin{figure*}[!t]
	\begin{subfigure}[t]{.5\textwidth}
\begin{algorithmic}[1]
	\algrestore{hcone}
\scriptsize
\renewcommand{\algorithmicprocedure}{\textbf{Operation}}
	\Procedure{$\mxph()$}{}\label{owfmxphstart}
	 \State{$\maxphase.\fadd(1)$;}
	 \State{$return$ $\maxphase$;}
	\EndProcedure\label{owfmxphend}
	\algstore{maxph}
\end{algorithmic}
\hrule
\begin{algorithmic}[1]
	\algrestore{maxph}
\renewcommand{\algorithmicprocedure}{\textbf{Operation}}
\scriptsize
	\Procedure{ \hgds($phase$)}{}\label{owfhgdsstart}
	    \For{($tid \gets 0$ $to $ $\stat.end()$)}
	    \State{$\ODA$ $desc\gets \stat[tid]$;}
	    \If{($desc.\phase \leq phase$)}
	    \If{($desc.\type = \taddv$)}
	    \State{$\haddv(tid, desc.\phase)$;}
	    \ElsIf{($desc.\type = \tremv$)}
	    \State{$\hremv(tid,desc.\phase)$;}
	    \ElsIf{($desc.\type = \tadde$)}
	    \State{$\hadde(tid,desc.\phase)$;}
	    \ElsIf{($desc.\type = \treme$)}
	    \State{$\hreme(tid,desc.\phase)$;}
	    \ElsIf{($desc.\type = \tconv$)}
	    \State{$\wfconv(tid,desc.\phase)$;}
	    \ElsIf{($desc.\type = \tcone$)}
	    \State{$\wfcone(tid,desc.\phase)$;}
	    \EndIf
	    \EndIf
	    \EndFor
	\EndProcedure\label{owfhgdsend}
	\algstore{hgds}
\end{algorithmic}	
	    \hrule
\begin{algorithmic}[1]
    \renewcommand{\algorithmicprocedure}{\textbf{
Operation}}
	\algrestore{hgds}
	\scriptsize
	\Procedure{ $\naddv$($key$)}{}\label{owfnaddvstart}
    \For{($tries \gets 0$ $to $ $\maxfail$)}
	\State{$\langle pv,cv \rangle \gets \wlocv(key)$};
	\If{$(cv.\vkey = key)$}
	\State{$return$ $\fal$};
	\Else
	\State{$\vnode$ $v$};
	\State{$\vnode$ $next$ $\gets  v.\vnext$};
	\State{$\CAS(v.\vnext, next, cv)$;}
    \If{($\CAS(pv.\vnext, curr, v)$)} \label{lin:ref-check}
    \State{$return$ $\tru$};
	\EndIf
	\EndIf
	\EndFor
	\State{$return$ $\wfaddv(key)$};
	\EndProcedure\label{owfnaddvend}
	\algstore{naddv}
    \end{algorithmic}
    \hrule
\begin{algorithmic}[1]
    \renewcommand{\algorithmicprocedure}{\textbf{Operation}}
	\algrestore{naddv}
	\scriptsize
	\Procedure{ $\nremv$($key$)}{}\label{owfnremvstart}
	    \For{($tries \gets 0$ $to $ $\maxfail$)} \label{lin:rem_vertex_1}
		\State{$\langle pv,cv \rangle \gets \wlocv(key)$};
		\If{$(cv.\vkey \neq key)$}
		\State{$return$ $\fal$};
		\Else
		\State{$se \gets cv.\vnext$};
		\State{$me \gets \MarkedRef(se)$};
		\If{$(\neg\CAS(cv.\vnext,se,me))$}
		\State{$goto$ \lineref{rem_vertex_1}};
		\EndIf
		\If{$(\CAS(pv.\vnext,ce,se))$}
		\State{$return$ $\tru$};
		\EndIf
		\EndIf
		\EndFor
		\State{$return$ $\wfremv(key)$};
	\EndProcedure\label{owfnremvend}
	\algstore{wfremv}
    \end{algorithmic}
    \hrule
\begin{algorithmic}[1]
    \renewcommand{\algorithmicprocedure}{\textbf{Operation}}
	\algrestore{wfremv}
	\scriptsize
	\Procedure{$\nconv$($key$)}{}\label{owfnconvstart}
	    \For{($tries \gets 0$ $to $ $\maxfail$)} 
		\State {$\vnode$ $v \gets read(\vh)$;}
		\While{$(v.\vkey < key)$}
		\State {$v \gets \unMarkedRef(v.\vnext)$ ;}
		\EndWhile
		\State {return $(v.\vkey = key) \land (IsMrkd(v) = \fal)$};
		\EndFor
		\State{$return$ $\wfconv(key)$};
	\EndProcedure\label{owfnconvend}
	\algstore{wfconv}
\end{algorithmic}
\hrule
  	\end{subfigure}
\begin{subfigure}[t]{.5\textwidth}
\begin{algorithmic}[1]
    \renewcommand{\algorithmicprocedure}{\textbf{Operation}}
	\algrestore{wfconv}
	\scriptsize
	\Procedure{$\nadde$($key_1,key_2$)}{}\label{owfnaddestart}
	    \For{($tries \gets 0$ $to $ $\maxfail$)} 
		\State{$\langle v_1,v_2,flag\rangle$ $\gets$ $LocateUV(key_1,key_2$ $);$}
        \If {($flag$ = $\fal$)}
        \State {$return$ $\fal$};
        \Else
        \State{$\langle pe,ce \rangle \gets \nwloce(v_1,key_2)$;}
        \If{$(ce.\vkey = key_2)$}
        \State{$return$ $\fal$};
        \Else
        \State{$newe \gets \enode(key_2)$};
        \State{$next \gets newe.\enext$};
        \State{$\CAS(newe.\enext,next,cv)$};
        \If{$(\CAS(pe.\enext,ce,newe))$};
        \State{$newe.\pointv \gets v_2$;}
        \State{$return$ $\tru$};
        \EndIf
        \EndIf
        \EndIf
		\EndFor
		\State{$return$ $\wfadde(key_1,key_2)$};
	\EndProcedure\label{owfnaddeend}
	\algstore{wfadde}
\end{algorithmic}
\hrule
\begin{algorithmic}[1]
    \renewcommand{\algorithmicprocedure}{\textbf{Operation}}
	\algrestore{wfadde}
	\scriptsize
	\Procedure{$\nreme$($key_1,key_2$)}{}\label{owfnremestart}
	    \For{($tries \gets 0$ $to $ $\maxfail$)} \label{lin:rem_edge1}
	    \State{$\langle v_1,v_2,flag\rangle$ $\gets$ $LocateUV(key_1,key_2$ $);$}
        \If {($flag$ = $\fal$)}
        \State {$return$ $\fal$};
        \Else
        \State{$\langle pe,ce \rangle \gets \nwloce(v_1,key_2)$;}
        \If{$(ce.\ekey \neq key_2)$}
        \State{$return$ $\fal$};
        \Else
        \State{$se \gets ce.\enext$};
        \State{$me \gets \MarkedRef(se)$};
        \If{$(\neg\CAS(ce.\enext,se,me))$};
        \State{$goto$ \lineref{rem_edge1}};
        \EndIf
        \If{$(\CAS(pe.\enext,ce,succ))$};
        \State{$return$ $\tru$};
        \EndIf
        \EndIf
        \EndIf
		\EndFor
		\State{$return$ $\wfreme(key_1,key_2)$};
	\EndProcedure\label{owfnremeend}
	\algstore{nreme}
\end{algorithmic}
\hrule
\begin{algorithmic}[1]
    \renewcommand{\algorithmicprocedure}{\textbf{Operation}}
	\algrestore{nreme}
 	\scriptsize
	\Procedure{$\ncone$($key_1,key_2$)}{}\label{owfnconestart}
	    \For{($tries \gets 0$ $to $ $\maxfail$)} 
	    \State {$\vnode$ $v_1,$ $v_2$;}
	    \State {$\enode$ $e$;}
	    \State{$\langle v_1,v_2,flag\rangle$ $\gets$ $LocateUV(key_1$ $,ke$ $y_2);$}
        \If {($flag$ = $\fal$)}
        \State {$return$ \vntp; }\hspace{.4cm}
        \EndIf
        \If {$(\isMarked(v_1)$ $\vee$ $\isMarked(v_2))$}  \label{lin:wohe-loce8:16_}
        \State {$return$ \vntp; }
        \EndIf
		\State {$e  \gets \eh$;}
		\While{$((e.\ekey < key_2))$}
		\State {$e \gets \unMarkedRef(e.\enext)$ ;}
		\EndWhile
		\If{($(e.\ekey = key_2)$ $\land$ $(\neg\isMarked(e.$ $\enext)$ $\land$
		    $(\neg\isMarked(v_1.\vnext)$ $\land(\neg$ $\isMarked($ $v_2.\vnext))$)}
		 \State{ return \ep;}
		 \Else
		 \State{  return \ventp;}
		 \EndIf
		\EndFor
		\State{$return$ $\wfcone(key_1,key_2)$};
	\EndProcedure \label{owfnconeend}
    \end{algorithmic}
    \hrule
	\end{subfigure}
	\setlength{\belowcaptionskip}{-15pt}
	\caption{Pseudo-codes of \wfaddv, \wfremv, $\wfconv$ and \wfcone}\label{fig:wf-fpsp-methods1}
\end{figure*}

\end{document}